\newcommand{\paperfont}{\fontsize{11pt}{1.2\baselineskip}\selectfont}
\begin{document}
	
\theoremstyle{definition}
\makeatletter
\thm@headfont{\bf}
\makeatother
\newtheorem{theorem}{Theorem}[section]
\newtheorem{definition}[theorem]{Definition}
\newtheorem{lemma}[theorem]{Lemma}
\newtheorem{proposition}[theorem]{Proposition}
\newtheorem{corollary}[theorem]{Corollary}
\newtheorem{remark}[theorem]{Remark}
\newtheorem{example}[theorem]{Example}
\newtheorem{assumption}[theorem]{Assumption}

\lhead{}
\rhead{}
\lfoot{}
\rfoot{}

\renewcommand{\refname}{References}
\renewcommand{\figurename}{Figure}
\renewcommand{\tablename}{Table}
\renewcommand{\proofname}{Proof}
	
\newcommand{\diag}{\mathrm{diag}}
\newcommand{\tr}{\mathrm{tr}}
\newcommand{\re}{\mathrm{Re}}
\newcommand{\one}{\mathbbm{1}}
\newcommand{\Pnum}{\mathbb{P}}
\newcommand{\Enum}{\mathbb{E}}
\newcommand{\Rnum}{\mathbb{R}}
\newcommand{\dnum}{\mathrm{d}}
\newcommand{\hyper}{{}_2F_1}
\newcommand{\confl}{{}_1F_1}

\title{\textbf{Steady-state joint distribution for first-order stochastic reaction kinetics}}
\author{Youming Li$^{1}$,\;\;\;Da-Quan Jiang$^{2,3,*}$,\;\;\;Chen Jia$^{1,*}$ \\
\footnotesize $^1$ Applied and Computational Mathematics Division, Beijing Computational Science Research Center, Beijing 100193, China. \\
\footnotesize $^2$ LMAM, School of Mathematical Sciences, Peking University, Beijing 100871, China. \\
\footnotesize $^3$ Center for Statistical Science, Peking University, Beijing 100871, China. \\
\footnotesize $^*$ Correspondence: jiangdq@math.pku.edu.cn (D.-Q. Jiang), chenjia@csrc.ac.cn (C. Jia)}

\date{}
\maketitle
\thispagestyle{empty}

\paperfont

\begin{abstract}
While the analytical solution for the marginal distribution of a stochastic chemical reaction network has been extensively studied, its joint distribution, i.e. the solution of a high-dimensional chemical master equation, has received much less attention. Here we develop a novel method of computing the exact joint distributions of a wide class of first-order stochastic reaction systems in steady-state conditions. The effectiveness of our method is validated by applying it to four gene expression models of biological significance, including models with 2A peptides, nascent mRNA, gene regulation, translational bursting, and alternative splicing.
\end{abstract}

\section{Introduction}
Stochastic modeling of chemical reaction networks has attracted massive attention in recent years due to its wide applications in biology, chemistry, ecology, and epidemics \cite{anderson2015stochastic}. If a reaction system is well mixed and the number of molecules is very large, random fluctuations can be ignored and the evolution of concentrations of all chemical species can be modeled deterministically as a set of ordinary differential equations (ODEs) based on the law of mass action. If the chemical species are present in low numbers, however, random fluctuations can no longer be ignored and the evolution of copies numbers of all species is usually modeled stochastically as a Markov jump process whose dynamics is governed by the well-known chemical master equation (CME). Thus far, stochastic chemical reaction networks have become a fundamental model for single-molecule enzymology \cite{qian2002single, jia2012kinetic} and single-cell gene expression dynamics \cite{paulsson2005models}. Over the past two decades, the marginal distributions of stochastic reaction systems, such as Michalies-Menten enzyme kinetics \cite{schnoerr2014complex, holehouse2020stochastic}, gene expression dynamics \cite{peccoud1995markovian, shahrezaei2008analytical, zhou2012analytical}, and gene regulatory networks \cite{hornos2005self, grima2012steady, vandecan2013self, kumar2014exact, bokes2015protein, jia2020small, jia2020dynamical}, have been studied extensively by solving the CME exactly or approximately based on various methods. These approaches include the generating function method \cite{peccoud1995markovian}, method of characteristics \cite{shahrezaei2008analytical}, multiscale techniques \cite{melykuti2014equilibrium}, moment closure approximation \cite{lakatos2015multivariate}, moment convergence method \cite{zhang2016moment}, linear noise approximation \cite{thomas2014phenotypic}, linear mapping approximation \cite{cao2018linear}, etc.

The joint distribution of all chemical species for stochastic chemical reaction kinetics has received relatively little attention. Mathematically, the steady-state distribution of a reaction system corresponds to the eigenvector associated with the zero eigenvalue of the rate matrix of the underlying Markovian model. It can always be solved analytically when the rate matrix is finite-dimensional. However, for most reaction systems, the rate matrix is infinite-dimensional since the numbers of reactants are not bounded. In this case, simple approaches like diagonalization of the rate matrix usually fail. Due to the limitation of techniques, the joint distribution can only be solved for some particular systems. It has long been known \cite{krieger1960first, darvey1966stochastic} that (i) the steady-state joint distribution of a closed monomolecular system, which only includes reactions of the form $S_i\rightarrow S_j$, must be a multinomial distribution and (ii) the joint distribution of a detailed balanced reaction network is given by a product of Poissons \cite{van1976equilibrium}. Here detailed balance means that there is no net flux between any pair of reversible reactions. The CME for an open monomolecular system, which consists of synthesis reactions $\varnothing\rightarrow S_i$, degradation reactions $S_i\rightarrow \varnothing$, and conversion reactions $S_i\rightarrow S_j$, has also been solved exactly and the steady-state joint distribution is given by a product-form Poisson distribution \cite{gans1960open, gadgil2005stochastic, heuett2006grand, jahnke2007solving}. Recently, this result has been extended to general stochastic reaction networks that are complex balanced. Here complex balance means that the flux flowing into each complex (see \cite{horn1972general} for definition) is precisely balanced by the flux flowing out of that complex \cite{horn1972general}. In fact, the steady-state joint distribution of a complex balanced reaction network is also given by a product-form Poisson-like distribution \cite{anderson2010product, cappelletti2016product}.

However, the condition of complex balance is very restrictive and not applicable to most systems of biological relevance. If complex balance is not satisfied, the joint distribution has been analytically derived for hierarchic first-order reaction networks \citep{reis2018general}. In the context of stochastic gene expression, the joint distribution for the copy numbers of mRNA and protein has been exactly solved for the two-stage model involving transcription and translation \cite{bokes2012exact, pendar2013exact} and the joint distribution for the copy numbers of two mRNA isoforms has also been analytically derived in the presence of alternative splicing \cite{wang2014alternative}. In addition, the joint distributions of gene expression models have also been studied using the linear noise approximation in the limit of large system size \cite{thomas2014phenotypic}. In most previous papers, the closed-form solution of the joint distribution is computed by first converting the CME into a system of partial differential equations (PDEs) satisfied by the generating function and then solving the system of PDEs using the method of characteristics. However, this method is often very difficult to apply because of the tedious computations involved. Thus far, there is still a lack of a simple and effective approach that can be applied to a wide class of first-order reaction networks.

In this article, we propose a novel and effective method of computing the joint distribution of a first-order reaction system in steady-state conditions. The key idea is to simplify the Markovian model of stochastic reaction kinetics to a modified Markovian model by allowing all zero-order reactions to occur only when all chemical species have zero copies. It turns out that the modified model has a much simpler state space and thus its joint distribution is much easier to solve. Once the modified model is solved analytically, the joint distribution of the original model is automatically obtained by making a simple transformation. Compared with the classical method of characteristics, our approach greatly reduces the theoretical complexity. The paper is organized as follows. In Section 2, we describe the stochastic model of first-order reaction networks and introduce our method in detail. In Section 3, we validate the effectiveness of our approach by applying it to four gene expression models of biological significance. These models include (i) a gene expression model involving 2A self-cleaving peptides, (ii) a multi-step gene expression model involving nascent mRNA, (iii) a gene regulatory model involving translational bursting, and (iv) a multi-step gene expression model involving alternative splicing. We conclude in Section 4.

\section{Model and methods}\label{methods}
A chemical reaction involving a set of chemical species $S_1,\dots,S_N$ can be written in the following general form:
\begin{equation*}
\mu^1S_1+\mu^2S_2+\dots+\mu^NS_N\xrightarrow{k}\nu^1S_1+\nu^2S_2+\dots+\nu^NS_N,
\end{equation*}
where $\mu^i$ and $\nu^i$ are nonnegative integers and $k$ is the rate constant. The order of this reaction is the sum of coefficients of all the reactants, i.e. $\mu^1+\mu^2+\dots+\mu^N$. Following the definition in \citep{anderson2015stochastic}, a reaction system is said to be \emph{first-order} if it only consists of zero-order and first-order reactions. By definition, a first-order reaction system can be written in the following general form:
\begin{gather*}
R_{0j} \colon \varnothing \xrightarrow{k_{0j}} \nu_{0j}^1 S_1+\nu_{0j}^2 S_2+\dots+\nu_{0j}^N S_N,\;\;\; j=1,\dots,r_0,\\
R_{ij} \colon S_i \xrightarrow{k_{ij}} \nu_{ij}^1 S_1+\nu_{ij}^2 S_2+\dots+\nu_{ij}^N S_N,\;\;\; i=1,\dots,N,\;\;\; j=1,\dots,r_i,
\end{gather*}
where $R_{0j}$, $j = 1,\dots,r_0$ are all zero-order reactions involved in the system and $R_{ij}$, $j = 1,\dots,r_i$ are all first-order reactions associated with the reactant $S_i$. For convenience, we write $\nu_{ij}=(\nu^1_{ij},\dots,\nu^N_{ij})$ for each $i = 0,1,\dots,N$ and $j=1,\dots,r_i$. A first-order reaction system can include synthesis reactions $\varnothing \rightarrow S_i$, degradation reactions $S_i\rightarrow \varnothing$, conversion reactions $S_i\rightarrow S_j$, catalytic reactions $S_i\rightarrow S_i+S_j$, and splitting reactions $S_i\rightarrow S_j+S_k$; hence it can be widely applied to model various naturally occurring systems in biology and physics.

We next focus on the stochastic dynamics of a first-order reaction network. The microstate of the system can be described by an ordered $N$-tuple $n = (n_1,\dots,n_N)$, where $n_i$ denotes the molecule number of $S_i$. Based on the law of mass action, the stochastic dynamics of the system can be described by a Markov jump process whose transition rates are given by
\begin{equation}\label{rateoriginal}
\begin{gathered}
q_{n,n+\nu_{0j}} = k_{0j}, \;\;\; 1\leq j\leq r_0,\\
q_{n,n+\nu_{ij}-e_i} = k_{ij}n_i, \;\;\; 1\leq i\leq N, \;\;\; 1\leq j\leq r_i,
\end{gathered}
\end{equation}
where $q_{n,n'}$ denotes the transition rate from microstate $n$ to microstate $n'$, $\nu_{0j}$ is the reaction vector of the zero-order reaction $R_{0j}$, i.e. the vector indicating the species change after the reaction, and $\nu_{ij}-e_i$ is the reaction vector of the first-order reaction $R_{ij}$ with $e_i$ being the vector whose $i$th component is $1$ and all other components are zero.

Throughout this paper, we assume that the reaction system is ergodic, which guarantees that the system has a unique steady-state distribution. Let $p_n = p_{n_1,\cdots,n_N}$ denote the probability of observing microstate $n$. Then the evolution of the Markovian system is governed by the CME
\begin{equation}
\dot{p}_n =
\sum_{i=1}^N\sum_{j=1}^{r_i}k_{ij}[(n_i+1-\nu^i_{ij})p_{n+e_i-\nu_{ij}}-n_ip_n]+\sum_{j=1}^{r_0}k_{0j}[p_{n-\nu_{0j}}-p_n],
\end{equation}
where the first term on the right-hand side corresponds to the occurrence of first-order reactions and the second term corresponds to the occurrence of zero-order reactions. To proceed, let
\begin{equation*}
F(x_1,\cdots,x_n) = \sum_{n_1,\cdots,n_N}p_{n_1,\dots,n_N} x_1^{n_1}\dots x_N^{n_N}
\end{equation*}
denote the generating function associated with the joint distribution $p_{n_1,\dots,n_N}$. Then $F$ satisfies the following PDE \cite{reis2018general}:
\begin{equation}\label{f}
\frac{\partial F}{\partial t}=\sum_{i=1}^N\sum_{j=1}^{r_i} k_{ij}\left(x^{\nu_{ij}}-x_i\right)\frac{\partial F}{\partial x_i}+\sum_{j=1}^{r_0} k_{0j}\left(x^{\nu_{0j}}-1\right)F.
\end{equation}
A classical method of solving the CME is to first solve Eq. \eqref{f} to obtain the closed form of the generating function $F$ and then recover the joint distribution $p_n$ by taking the derivatives of $F$ at zero. However, it is remarkably difficult to solve Eq. \eqref{f} analytically in most cases, even at the steady state.

Here we propose a novel method of solving the CME in steady-state conditions. To this end, we construct a simpler Markov jump process called the \emph{modified Markovian model}. The microstate of the modified model is still described by an ordered $N$-tuple $n = (n_1,n_2,\dots,n_N)$. Note that for the original model, the zero-order reaction $R_{0j}$ can lead to a transition from any microstate $n$ to microstate $n+\nu_{0j}$; in other words, the zero-order reactions can occur at any microstate of the original model. However, for the modified model, we only allow the zero-order reactions to occur at the microstate $\mathbf{0}=(0,\dots,0)$, which is called the \emph{zero microstate}, while the first-order reactions follow the same transition rule as the original model. To summarize, the transition rates for the modified model are given as follows:
\begin{equation}\label{ratemodified}
\begin{gathered}
\tilde{q}_{n,n+\nu_{0j}} = \begin{cases} 0, & n\neq\mathbf{0},\\
k_{0j}, & n=\mathbf{0},
\end{cases} \;\;\; 1\leq j\leq r_0,\\
\tilde{q}_{n,n+\nu_{ij}-e_i} = k_{ij}n_i, \;\;\; 1\leq i\leq N,\;1\leq j\leq r_i.
\end{gathered}
\end{equation}
Comparing Eqs. \eqref{rateoriginal} and \eqref{ratemodified}, we can see that the modified model can be easily derived from the original one by eliminating those transitions from $n$ to $n+\nu_{0j}$ for $n\neq\mathbf{0}$. Let $\pi_n = \pi_{n_1,n_2,\dots,n_N}$ denote the probability of observing microstate $n$ for the modified model and let
\begin{equation*}
H(x_1,\dots,x_N)=\sum_{n}\pi_{n_1,\dots,n_N}x_1^{n_1}\dots x_N^{n_N}
\end{equation*}
denote its generating function. Then the evolution of the modified model is then governed by the master equation
\begin{equation}\label{mastermodified}
\begin{split}
\dot{\pi}_n =&\; \sum_{i=1}^N\sum_{j=1}^{r_i}\tilde{q}_{n+e_i-\nu_{ij},n}\pi_{n+e_i-\nu_{ij}}
-\sum_{i=1}^N\sum_{j=1}^{r_i}\tilde{q}_{n,n+\nu_{ij}-e_i}\pi_n\\
&\;+\sum_{j=1}^{r_0}\tilde{q}_{n-\nu_{0j},n}\pi_{n-\nu_{0j}}
-\sum_{j=1}^{r_0}\tilde{q}_{n,n+\nu_{0j}}\pi_{n}.
\end{split}
\end{equation}
where the first two terms on the right-hand side correspond to first-order reactions and the last two terms correspond to zero-order reactions. We next make a crucial observation that if the system contains at least one zero-order reaction and both the original and modified models have reached the steady state, then the two generating functions $F$ and $H$ are related by (see Appendix A for the proof)
\begin{equation}\label{expression}
F(x_1,\dots,x_N) = e^{\frac{H(x_1,\dots,x_N)-1}{\pi_{\mathbf{0}}}},
\end{equation}
where $\pi_\mathbf{0}$ is the probability of observing the zero microstate for the modified model. In general, the modified model has a simpler transition diagram than the original model and thus the master equation for the former is much easier to solve. Once we have obtained the generating function $H$ of the modified model, we can use Eq. \eqref{expression} to compute the generating function $F$ of the original model. Finally, the steady-state joint distribution for the copy numbers of all chemical species can be recovered by taking the derivatives of $F$ at zero, i.e.
\begin{equation*}
p_n = \frac{1}{n_1!\cdots n_N!}\frac{\partial^{n_1+\dots+n_N}F}{\partial x_1^{n_1}\dots\partial x_N^{n_N}}(0,\dots,0).
\end{equation*}

We summarize the above method as follows: first, we construct the modified model (which is usually much simpler than the original model) and compute its steady-state joint distribution $\pi_n$; next, we calculate the generating function $H$ of the modified model and use Eq. \eqref{expression} to compute the generating function $F$ of the original model; finally, we recover the steady-state joint distribution of the original model by taking the derivatives of $F$. We emphasize that Eq. \eqref{expression} does not hold if the two models have not reached the steady state. In fact, the proof of Eq. \eqref{expression} relies on the close relationship between the partial derivatives of $F$ and $H$ with respect to $x_i$ in steady-state conditions, while for the time-dependent case, we need to take the partial derivatives with respect to $t$ into consideration, which invalidates our approach (see Appendix A for details).
\begin{figure}[!htb]
\centering\includegraphics[width=120mm]{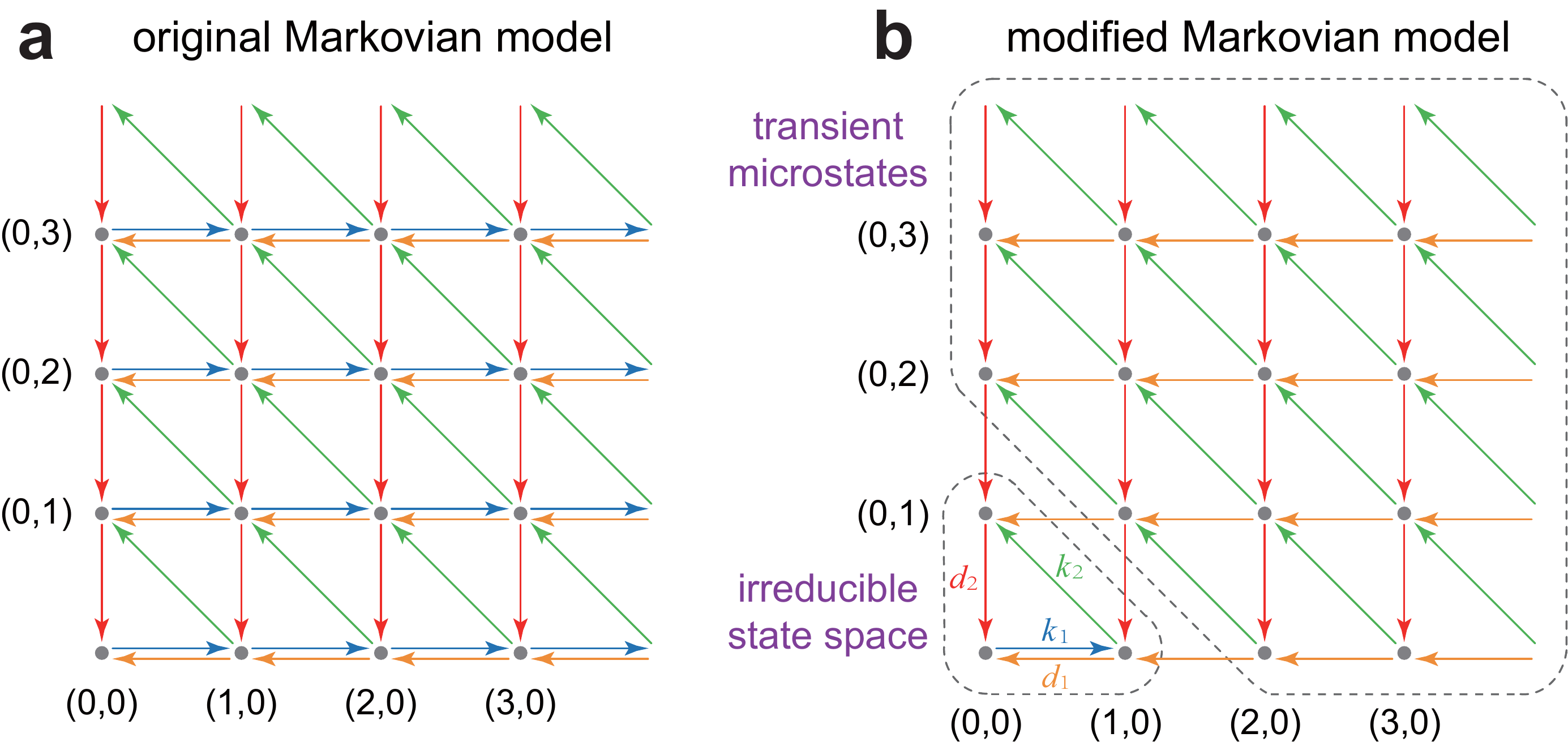}
\caption{\textbf{Transition diagrams of the original and modified Markovian models for the reaction scheme given in Eq. \eqref{example1}.} The red arrows correspond to the reaction $P_2\rightarrow\varnothing$, the green arrows correspond to $P_1\rightarrow P_2$, the blue arrows correspond to $\varnothing\rightarrow P_1$, and the orange arrows correspond to $P_1\rightarrow \varnothing$. For the modified model, we only allow the zero-order reaction (blue arrows) to occur at the zero microstate. The irreducible state space of the modified model only consists of three microstates: $(0,0)$, $(1,0)$, and $(0,1)$.}\label{comparison}
\end{figure}

We next focus on the transition diagrams of the two models. Recall that a microstate $n$ is called recurrent if there exists a path in the transition diagram that starts from $n$ and returns to itself; otherwise it is called transient. Actually, transient microstates contribute nothing to the steady-state probabilities and thus the steady-state distribution is only concentrated on the collection of all recurrent microstates, which is called the irreducible state space \cite{norris1998markov, jia2016model}. Hence for both models, we only need to focus on the irreducible state space, instead of the whole state space. We now use a simple example to show the relationship between the two models. Consider the following open monomolecular system:
\begin{equation}\label{example1}
\varnothing \autorightleftharpoons{$k_1$}{$d_1$}P_1 \xrightarrow{k_2}P_2 \xrightarrow{d_2}\varnothing.
\end{equation}
If we regard $P_1$ and $P_2$ as two conformational states of a protein, then this reaction scheme describes the synthesis, degradation, and conformational changes for the protein. The transition diagrams for the original and modified models associated with this reaction scheme are depicted in Fig. \ref{comparison}. The difference between them is that the zero-order reaction $\varnothing\rightarrow P_1$ (blue arrows) can occur at any microstate for the original model, but it can only occur at only the zero microstate for the modified model. It can be seen from Fig. \ref{comparison}(b) that the modified model has many transient microstates. The irreducible state space (the collection of all recurrent microstates) for the original model is the whole two-dimensional nonnegative integer lattice, while the irreducible state space for the modified model is simply the collection of the following three microstates:
\begin{equation*}
\{(0,0), (1,0), (0,1)\},
\end{equation*}
which is much simpler than that of the original model. Once the modified model enters the irreducible state space, it can never leave it anymore. Since the steady-state distribution of the modified model is only concentrated on the irreducible state space which contains only three microstates, we immediately obtain
\begin{equation*}
\pi_{0,0}=\frac{1}{1+a+b},\;\;\;
\pi_{1,0}=\frac{a}{1+a+b},\;\;\;
\pi_{0,1}=\frac{b}{1+a+b},
\end{equation*}
where $a = k_1/(k_2+d_1)$ and $b = k_1k_2/(k_2+d_1)d_2$. Thus the generating function of the modified model is given by $H(x_1,x_2) = \pi_{0,0}+\pi_{1,0}x_1+\pi_{0,1}x_2$. It then follows from Eq. \eqref{expression} that the generating function of the original model is given by
\begin{equation*}
F(x_1,x_2) = e^{\frac{\pi_{0,0}+\pi_{1,0}x_1+\pi_{0,1}x_2-1}{\pi_{0,0}}}
= e^{a(x_1-1)+b(x_2-1)}.
\end{equation*}
Then the steady-state joint distribution for the copy numbers of $P_1$ and $P_2$ can be recovered by taking the derivatives of $F$ at zero, which finally gives
\begin{equation*}
p_{n_1,n_2} = \frac{a^{n_1}b^{n_2}}{n_1!n_2!}e^{-(a+b)}.
\end{equation*}	
Note that this is the product of two Poisson distributions. In fact, it has been shown in \cite{gadgil2005stochastic} that the joint distribution of an open monomolecular system must be a product of Poissons, which is consistent with our result. However, compared with the derivation in \cite{gadgil2005stochastic}, our method is much simpler.

Our method can also be used to compute many other quantities of interest. First, the steady-state marginal distribution for the copy number of any chemical species can be easily computed. To see this, let $p^i_{n_i}$ denote the steady-state probability of having $n_i$ copies of $S_i$. Then the marginal distribution can be recovered from the generating function $F$ as
\begin{equation}\label{marginaldist}
p^i_{n_i}=\frac{1}{n_i!}\frac{\partial^{n_i}F}{\partial x_i^{n_i}}(1,\cdots,0,\cdots,1),
\end{equation}
where $(1,\cdots,0,\cdots,1)$ is the vector whose $i$th component is $0$ and other components are all $1$. Note that the generating function given in Eq. \eqref{expression} is a composite function. The following Fa$\grave{\text{a}}$ di Bruno's formula \cite{johnson2002curious} gives the explicit expression for the higher-order derivatives of a composite function:
\begin{equation*}
\frac{d^n}{d x^n}f(g(x))=\sum_{k=1}^n f^{(k)}(g(x))B_{n,k}(g'(x),g''(x),\dots,g^{(n-k+1)}(x)),
\end{equation*}
where $B_{n,k}(x_1,\dots,x_{n-k+1})$ is the incomplete Bell polynomial \cite{bell1927partition}. The above two equations, together with Eq. \eqref{expression}, give the following analytical expression for the marginal distributions of all species:
\begin{equation}\label{marginal}
p^i_{n_i} = \frac{B_{n_i}(g_{i,1},g_{i,2},\cdots,g_{i,n_i})}{n_i!}e^{\frac{H(1,\cdots,0,\cdots,1)-1}{\pi_{\mathbf{0}}}},
\end{equation}
where
\begin{equation*}
B_n(x_1,\cdots,x_n) = \sum_{k=1}^nB_{n,k}(x_1,\cdots,x_{n-k+1})
\end{equation*}
is the complete Bell polynomial \cite{bell1927partition}, and
\begin{equation*}
g_{i,k} = \frac{1}{\pi_{\mathbf{0}}}\frac{\partial^k H}{\partial x_i^k}(1,\cdots,0,\cdots,1),\;\;\;
k = 1,\dots,n_i.
\end{equation*}
In addition, the steady-state mean and variance for the copy number of $S_i$ can be obtained as
\begin{gather*}
\langle n_i\rangle = \frac{\partial F}{\partial x_i}(1,\dots,1),\label{mean}\\
\sigma^2_{n_i} = \left[\frac{\partial^2 F}{\partial x^2_i}+\frac{\partial F}{\partial x_i}-\left(\frac{\partial F}{\partial x_i}\right)^2\right](1,\dots,1),\label{variance}
\end{gather*}
where $\sigma^2_{n_i} = \langle n_i^2\rangle-\langle n_i\rangle^2$ denotes the copy number variance of $S_i$. Finally, the steady-state covariance for the copy numbers of any pair of chemical species $S_i$ and $S_j$ can be computed as
\begin{equation*}
\mathrm{Cov}(n_i,n_j) = \langle n_in_j\rangle -\langle n_i\rangle \langle n_j\rangle
= \left[\frac{\partial^2 F}{\partial x_i\partial x_j}-\frac{\partial F}{\partial x_i}\frac{\partial F}{\partial x_j}\right](1,\dots,1).
\end{equation*}
In particular, the correlation coefficient between the copy numbers of $S_i$ and $S_j$ is given by
\begin{equation}\label{correlation}
\rho_{n_i,n_j}=\frac{\mathrm{Cov}(n_i,n_j)}{\sigma_{n_i}\sigma_{n_j}}.
\end{equation}
These formulas will be used to analyze the dynamic properties of some important gene expression models in what follows.

We have seen from the previous example that our method is particularly effective when the modified model has a finite irreducible state space. A natural question is when this occurs. To answer this, we recall that a family of reactions
\begin{equation*}
R_i\colon \mu_i^1S_1+\dots+\mu_i^NS_N\xrightarrow{k_i} \nu^1_{i} S_1+\dots+\nu^N_{i} S_N,
\;\;\;i=1,\dots,r
\end{equation*} has a conservation law, if there exists a nonzero vector $\omega=(\omega_1,\dots,\omega_N)$ such that
\begin{equation*}
\omega_1\mu^1_i+\omega_2\mu^2_i+\dots+\omega_N\mu^N_i = \omega_1\nu^1_i+\omega_2\nu^2_i+\dots+\omega_N\nu^N_i
\end{equation*}
for all $i=1,\dots,r$. In Appendix B, we prove that if all the first-order reactions except degradation reactions have a conservation law with positive coefficients $\omega_1,\cdots,\omega_N> 0$, then the modified model must have a finite irreducible state space. To verify this criterion, we apply it to the reaction scheme given in Eq. \eqref{example1}. For this reaction system, there are three first-order reactions:
\begin{equation*}
P_1\rightarrow P_2,\;P_1\rightarrow\varnothing,\;P_2\rightarrow\varnothing.
\end{equation*}
Among these reactions, only $P_1\rightarrow P_2$ is not a degradation reaction and obviously, it has a conservation law with positive coefficients $\omega_1 = \omega_2 = 1$ since the total number of $P_1$ and $P_2$ is invariant. It then follows from the above criterion that the corresponding modified model has a finite irreducible state space, which is consistent with the previous discussion. Before leaving this section, we emphasize that if a family of reactions contains a first-order catalytic reaction such as $S_i \rightarrow S_i+S_j$, which appears in many biochemical systems, then the family of reactions can never have a conservation law with positive coefficients. In this case, the modified model may have an infinite irreducible state space. Fortunately, for many biochemical systems involving catalytic reactions, our method is still applicable, although the computation will be more complicated than the case of finite irreducible state space. In the next section, we shall apply our method to compute the steady-state joint distributions of mRNAs and/or proteins in four gene expression models of biological significance.

\section{Applications}

\subsection{Gene expression model with 2A self-cleaving peptides}
As the first application, we consider a gene expression system involving 2A self-cleaving peptides, also called 2A peptides. Biologically, 2A peptides are 18-22 amino-acid-long oligopeptides derived from a wide range of viral families \cite{1991Cleavage,Andrea2005Development} that mediate cleavage of polypeptides during translation in eukaryotic cells \cite{2017Systematic}, and therefore enable the synthesis of several gene products (proteins) from a single transcript. For this reason, 2A peptides are widely used in genetic engineering to cleave a long peptide into two shorter peptides. Specifically, the coding region of a 2A peptide (2A) is inserted between the coding regions of two proteins (Fig. \ref{splitf}(a)). The mechanism of 2A-mediated self-cleavage was recently discovered to be ribosome skipping the formation of a peptide bond at the C-terminus of the 2A \cite{2001Analysis,2001The}. There are two possibilities for a 2A-mediated skipping event: (i) successful skipping and recommencement of translation results in two cleaved proteins: the protein upstream of the 2A is attached to the complete 2A peptide except for the C-terminal proline, while the protein downstream of the 2A is attached to one proline at the N-terminus; (ii) successful skipping but ribosome fall-off and discontinued translation results in only the protein upstream of the 2A \cite{2017Systematic}. Then the effective reactions describing the gene expression system are given by
\begin{gather}\label{example2}
G\xrightarrow{k_1}G+P_1+P_2,\;\;\; G\xrightarrow{k_2}G+P_1,\;\;\;
P_1\xrightarrow{d_1}\varnothing,\;\;\;P_2\xrightarrow{d_2}\varnothing,
\end{gather}
where $G$ is the coding region illustrated in Fig. \ref{splitf}(a) and $P_1$ and $P_2$ are two proteins. The first reaction describes ribosome skipping, the second reaction describes ribosome fall-off, and the remaining two reactions describe the degradation of the two proteins. The microstate of the system can be represented by an ordered pair $(n_1,n_2)$, where $n_i$ denotes the copy number of $P_i$. Let $p_{n_1,n_2}$ denote the probability of observing microstate $(n_1,n_2)$ and let
\begin{equation*}
F(x_1,x_2)=\sum_{n_1,n_2}p_{n_1,n_2}x_1^{n_1}x_2^{n_2}
\end{equation*}
denote the corresponding generating function. Then the stochastic gene expression dynamics can be described by a Markov jump process with transition diagram illustrated in Fig. \ref{splitf}(b). The evolution of the Markovian system is governed by the CME
\begin{equation*}
\begin{aligned}
\dot{p}_{n_1,n_2}&=k_1p_{n_1-1,n_2-1}+k_2p_{n_1-1,n_2}+d_1(n_1+1)p_{n_1+1,n_2}\\
& \;\;\;\; +d_2(n_2+1)p_{n_1,n_2+1}-(k_1+k_2+d_1n_1+d_2n_2)p_{n_1,n_2}.
\end{aligned}
\end{equation*}

\begin{figure}[!htb]
\centering\includegraphics[width=1.0\textwidth]{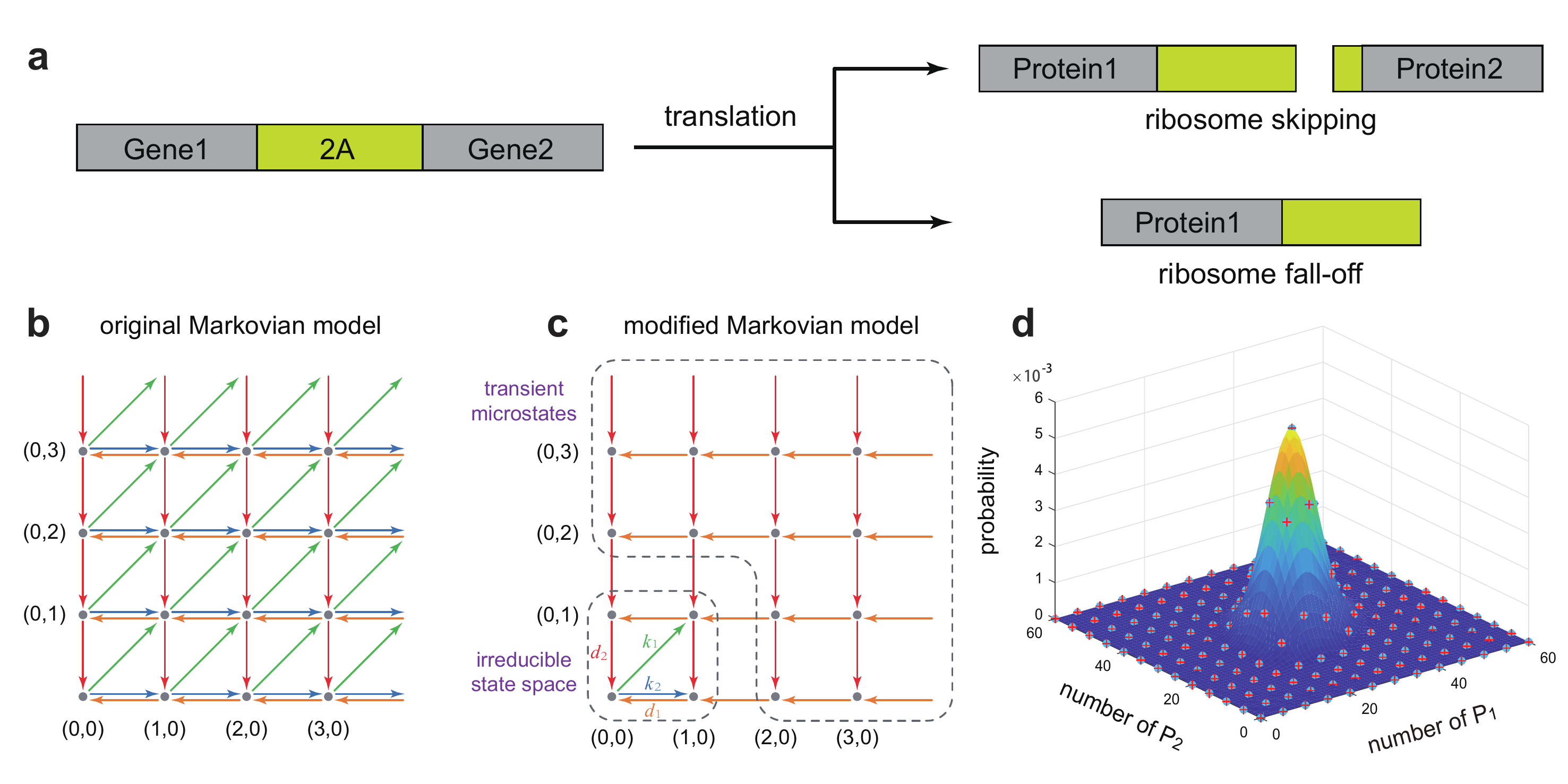}	
\caption{\textbf{A gene expression model involving 2A peptides}. (a) Translation mechanism of two genes with the coding region of a 2A peptide (2A) inserted in between. There are two possibilities: ribosome skipping results in two cleaved proteins and ribosome fall-off results in only the protein upstream of the 2A \cite{2017Systematic}. (b),(c) Transition diagrams for the original and modified models. The green arrows correspond to the reaction $G\rightarrow G+P_1+P_2$, the blue arrows correspond to $G\rightarrow G+P_1$, the red arrows correspond to $P_2\rightarrow \varnothing$, and the orange arrows correspond to $P_1\rightarrow\varnothing$. (d) Comparison of the analytical steady-state joint distribution for the numbers of the two proteins given in Eq. \eqref{distribution} (colored surface) with the numerical simulations obtained using FSP (red plus signs) and stochastic simulations obtained using SSA (light blue dots). Here SSA is performed by generating $80000$ stochastic trajectories. The model parameters are chosen as $k_1=30, k_2=30, d_1=2, d_2=1$.}\label{splitf}
\end{figure}

To solve this CME, we consider the modified Markovian model. We emphasize here that we do not take copy number variation of the gene into account and hence the first two reactions in Eq. \eqref{example2} can be regarded as zero-order reactions. The reason why we explicitly write out the gene $G$, instead of using $\varnothing$, in the first two reactions is to stress that proteins are produced from genes. Since the zero-order reactions can only occur at the zero microstate, the modified model has the transition diagram illustrated in Fig. \ref{splitf}(c). While the transition diagram of the modified model is complicated, the irreducible state space is actually finite and only contains the following four microstates:
\begin{equation*}
\{(0,0),(1,1),(1,0),(0,1)\}.
\end{equation*}
Since the steady-state distribution of the modified model is concentrated on the irreducible state space which contains only four microstates, it can be easily computed as
\begin{equation*}
\pi_{0,0}=\frac{\alpha_0}{\alpha},\;\;\;\pi_{1,0}=\frac{\alpha_1}{\alpha},\;\;\; \pi_{0,1}=\frac{\alpha_2}{\alpha},\;\;\;\pi_{1,1}=\frac{\alpha_{12}}{\alpha},
\end{equation*}
where
\begin{equation*}
\alpha_0 = 1,\;\;\;\alpha_1 = \frac{k_2(d_1+d_2)+k_1d_2}{d_1(d_1+d_2)},\;\;\;
\alpha_2 = \frac{k_1d_1}{d_2(d_1+d_2)},\;\;\;\alpha_{12}=\frac{k_1}{d_1+d_2},
\end{equation*}
and $\alpha = \alpha_0+\alpha_1+\alpha_2+\alpha_{12}$. Then the generating function of the modified model is given by
\begin{equation*}
\begin{aligned}
H(x_1,x_2)&=\frac{1}{\alpha}\left(\alpha_0+\alpha_1x_1+\alpha_{2}x_2+\alpha_{12}x_1x_2\right).\\
\end{aligned}
\end{equation*}
It then follows from Eq. \eqref{expression} that the generating function of the original model is given by
\begin{equation}\label{splittingexample}
F(x_1,x_2) = e^{\frac{H(x_1,x_2)-1}{\pi_{0,0}}}
= e^{\alpha_1(x_1-1)+\alpha_2(x_2-1)+\alpha_{12}(x_1x_2-1)}.
\end{equation}
This shows that the copy numbers of the two proteins have a bivariate Poisson distribution \cite{loukas1986index}, which can be recovered from $F$ by taking the derivatives:
\begin{equation}\label{distribution}
p_{n_1,n_2}
= \frac{1}{n_1!n_2!}\frac{\partial^{n_1+n_2}F}{\partial x_1^{n_1}\partial x_2^{n_2}}\left(0,0\right)
= \sum_{i=0}^{n_1\wedge n_2}\frac{\alpha_1^{n_1-i}\alpha_2^{n_2-i}\alpha^i_{12}}{i!(n_1-i)!(n_2-i)!}
e^{-\left(\alpha_1+\alpha_{2}+\alpha_{12}\right)},
\end{equation}
where $n_1\wedge n_2$ denotes the smaller one of $n_1$ and $n_2$. Taking $x_2=1$ and $x_1=1$ in Eq. \eqref{splittingexample}, we obtain
\begin{equation*}
F(x_1,1) = e^{(\alpha_1+\alpha_{12})(x_1-1)},\;\;\;F(1,x_2) = e^{(\alpha_2+\alpha_{12})(x_2-1)}.
\end{equation*}
It then follows from Eq. \eqref{marginaldist} that the steady-state marginal distributions for the two proteins are given by
\begin{equation}\label{poisson}
p^1_{n_1} = \frac{(\alpha_1+\alpha_{12})^{n_1}}{n_1!}e^{-(\alpha_1+\alpha_{12})},\;\;\;
p^2_{n_2} = \frac{(\alpha_2+\alpha_{12})^{n_2}}{n_2!}e^{-(\alpha_2+\alpha_{12})}.
\end{equation}
This shows that both proteins have a marginal Poisson distribution but their joint distribution is not the product of two Poisson distributions (note that a bivariate Poisson distribution may not be the product of two Poissons). This reaction system should be compared with complex balanced networks. In fact, it was shown in \cite{anderson2010product} that if a reaction system is complex balanced, then the copy numbers of all chemical species must have a product-form Poisson distribution in steady-state conditions, which is very different from the non-product-form Poisson distribution studied here.

To validate our analytical solution, we compare it with the numerical solutions obtained using the finite state projection algorithm (FSP) \cite{munsky2006finite} and the stochastic simulation algorithm (SSA), as illustrated in Fig. \ref{splitf}(d). When using FSP, we truncate the state space at large enough $N_1$ and $N_2$, with $N_1$ and $N_2$ being the truncation sizes for $n_1$ and $n_2$, respectively, and then solve the normalized eigenvector of the truncated rate matrix corresponding to the zero eigenvalue numerically using MATLAB. The truncation sizes are chosen to be $N_1 = 5(k_1+k_2)/d_1$ and $N_2 = 5k_1/d_2$. Since $(k_1+k_2)/d_1$ and $k_1/d_2$ are the typical copy numbers for proteins $P_1$ and $P_2$, respectively, the probability that the protein numbers are outside the truncation region is very small and practically can always be ignored. It can be seen that the analytical solution coincides perfectly with both FSP and SSA. Our analytical results can also be used to analyze the correlation between the two proteins. It follows from Eqs. \eqref{correlation} and \eqref{splittingexample} that the correlation coefficient between the numbers of $P_1$ and $P_2$ is given by
\begin{equation*}
\rho_{P_1,P_2}
= \frac{\alpha_{12}}{\sqrt{\left(\alpha_1+\alpha_{12}\right)\left(\alpha_2+\alpha_{12}\right)}}
= \frac{1}{{\sqrt{\left(1+\frac{k_2}{k_1}\right)\left(1+\frac{d_1}{d_2}\right)
\left(1+\frac{d_2}{d_1}\right)}}}.
\end{equation*}
Clearly, the numbers of the two proteins are always positively correlated and their correlation coefficient has the upper bound
\begin{equation*}
\rho_{P_1,P_2} \leq \frac{1}{2\sqrt{1+\frac{k_2}{k_1}}},
\end{equation*}
where the equality holds if and only if $d_1 = d_2$. This means that the correlation is the strongest when the degradation rates of the two proteins are equal. In addition, we can see that the correlation coefficient is always smaller than $0.5$ and is comparatively large when the degradation rates of the two proteins are close to each other, i.e. $d_1 \approx d_2$, and when the translation rate due to ribosome skipping is much larger compared to the translation rate due to ribosome fall-off, i.e. $k_1\gg k_2$. Before leaving this section, we point out that biologically, it is possible to generate three or more cleaved proteins from a single transcript using coding sequences of multiple 2A peptides \cite{2017Systematic}. In this case, our method can still be used to compute the joint copy number distributions for these proteins since the irreducible state space of the modified model is always finite.

\subsection{Gene expression model with nascent mRNA}\label{presection}
Based on the central dogma of molecular biology, the gene expression dynamics in an individual cell has a standard two-stage representation involving transcription and translation \cite{shahrezaei2008analytical}. In the literature, the transcription step is usually modeled as the elementary reaction $G\rightarrow G+M$, where $G$ is the gene of interest and $M$ is the corresponding mRNA. However, in living cells, the realistic transcription process is much more complicated: first the gene is transcribed to produce the so-called nascent mRNA and then several steps such as 5' capping, 3' polyadenylylation, and mRNA splicing to remove the introns are necessary for the nascent mRNA to become the mature mRNA \cite{saitou2013introduction}. Only the mature mRNA can undergo translation to produce the protein. Recent studies about single-cell RNA-sequencing data analysis have highlighted the need to incorporate the nascent mRNA dynamics into the model in order to introduce the key concept of RNA velocity \cite{la2018rna, li2020mathematics}.

Here we consider a more realistic gene expression model depicted in Fig. \ref{mrnaall}(a). Let $G$ denote the gene of interest, let $M_\star$ denote the nascent mRNA, let $M$ denote the mature mRNA, and let $P$ denote the protein. Then the effective reactions for the gene expression model are given by:
\begin{gather*}
G\xrightarrow{s} G+M_\star,\;\;\; M_\star\xrightarrow{k} M,\;\;\; M\xrightarrow{u}M+P, \\
M_\star\xrightarrow{f}\varnothing,\;\;\; M\xrightarrow{v} \varnothing,\;\;\; P \xrightarrow{d}\varnothing,
\end{gather*}
where the first reaction represents transcription, the second reaction represents the conversion of nascent mRNA into mature mRNA, the third reaction represents translation, and the remaining three reactions represent the degradation of all gene products. If the dynamics of nascent mRNA is ignored, then the steady-state joint distribution of mRNA and protein numbers has been derived in \cite{bokes2012exact}. Here we consider a more complicated model involving nascent mRNA (a similar model has been solved in \cite{pendar2013exact} using a different method). The microstate of the gene can be represented by an ordered triple $(m_\star,m,n)$: the copy number $m_*$ of nascent mRNA, the copy number $m$ of mature mRNA, and the copy number $n$ of protein. Let $p_{m_\star,m,n}$ denote the probability of observing microstate $(m_\star,m,n)$ and let
\begin{equation*}
F(x_\star,x,y)=\sum_{m_\star,m,n}p_{m_\star,m,n}x_\star^{m_\star}x^m y^n
\end{equation*}
denote the corresponding generating function. Then the evolution of the gene expression model is governed by the CME
\begin{equation*}
\begin{aligned}
\dot{p}_{m_\star,m,n}=&\; sp_{m_\star-1,m,n}+k(m_\star+1)p_{m_\star+1,m-1,n}+ump_{m_\star,m,n-1}\\
&\; +f(m_\star+1)p_{m_\star+1,m,n}+v(m+1)p_{m_\star,m+1,n}+d(n+1)p_{m_\star,m,n+1}\\
&\; -(s+km_\star+fm_\star+um+vm+dn)p_{m_\star,m,n}.
\end{aligned}
\end{equation*}

\begin{figure}[!htb]
\centering\includegraphics[width=1.0\textwidth]{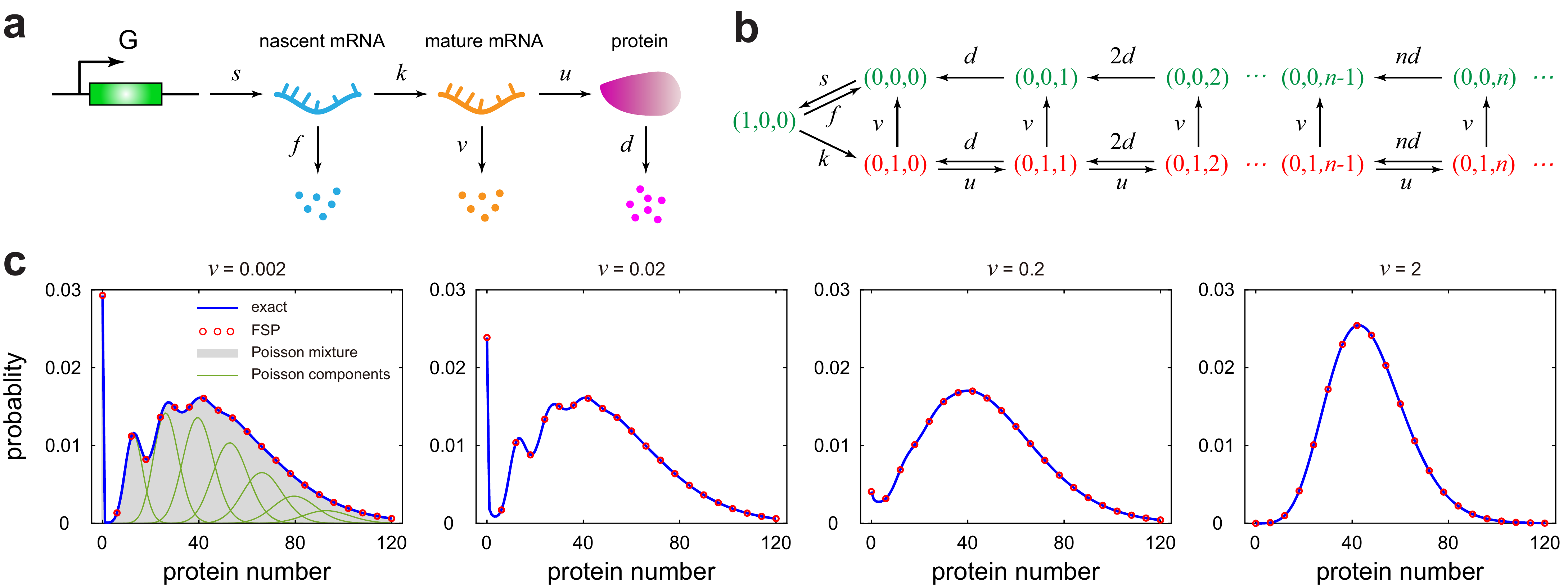}
\caption{\textbf{A multi-step gene expression model}. (a) Schematic of a multi-step gene expression model, which includes transcription, translation, and the production of mature mRNA from nascent mRNA. (b) Transition diagram for the modified model restricted to the irreducible state space. (c) Comparison between the exact steady-state distribution for the protein number given in Eq. \eqref{expression1} (blue curve) and FSP simulations (red circles) as $\nu = v/d$ varies while keeping $b = ks/(k+f)v$ as constant. The left panel also compares the exact distribution (blue curve) with the mixed Poisson approximation given in Eq. \eqref{mixture2} (grey region). The Poisson components of the mixed distribution are shown by the green curves. The model parameters are chosen as $u=40/3, d=1, k=0.002, s=10$ and the parameter $f$ is chosen so that $b=3.52$.}\label{mrnaall}
\end{figure}

To solve this CME, we consider the modified Markovian model. We emphasize again that we do not take copy number variation of the gene into consideration and thus the reaction $G\rightarrow G+M_\star$ can be viewed as a zero-order reaction. Since the zero-order reaction can only occur at the zero microstate, it is easy to see that the irreducible state space of the modified model is given by
\begin{equation*}
\{(1,0,0)\}\cup\{(0,0,n),(0,1,n):\;n\geq 0\},
\end{equation*}
and the transition diagram restricted to the irreducible state space is illustrated in Fig. \ref{mrnaall}(b), which has a ladder-shaped structure. In fact, ladder-shaped models arise in many gene expression models and have been extensively studied in the literature \cite{peccoud1995markovian, shahrezaei2008analytical, zhou2012analytical, hornos2005self, grima2012steady, vandecan2013self, kumar2014exact, bokes2015protein, jia2020small}. Such models are usually analytically tractable with their solutions being represented by hypergeometric functions (see \cite{melykuti2014equilibrium} for a detailed discussion on the analytical theory of ladder-shaped models). Note that the irreducible state space of the original model is the whole three-dimensional nonnegative integer lattice since $m_\star$, $m$, and $n$ can take all nonnegative integer values. Using the method proposed in this paper, we simplify a three-dimensional problem for the original model to a coupled one-dimensional problem for the modified model (here ``coupled" means that $m$ can only take the values of $0$ and $1$ and ``one-dimensional" means that $n$ ranges over all nonnegative integers), which greatly reduces the theoretical complexity.

Since the modified model is essentially one-dimensional, its generating function $H$ can be easily computed in steady-state conditions, which is given by (see Appendix C for details)
\begin{equation*}\label{gfinal2}
{\small\begin{split}
H(x_\star,x,y) = a\pi_{\mathbf{0}}(x_\star-1)+b\pi_{\mathbf{0}}
\left[(x-1){}_1F_1(1;1+\nu;\mu(y-1))+\mu\int_1^y{}_{1}F_1(1;1+\nu;\mu(z-1))dz\right]+1,
\end{split}}
\end{equation*}
where ${}_1F_1$ denotes the confluent hypergeometric function and
\begin{equation*}
a = \frac{s}{k+f},\;\;\;b = \frac{ks}{(k+f)v},\;\;\;\mu = \frac{u}{d},\;\;\;\nu = \frac{v}{d}.
\end{equation*}
It then follows from Eq. \eqref{expression} that the generating function $F$ of the original model is given by
\begin{equation}\label{preexpression}
\begin{aligned}
F(x_\star,x,y) = e^{a(x_\star-1)+b\left[(x-1){}_1F_1(1;1+\nu;\mu(y-1))
+\mu\int_1^y{}_{1}F_1(1;1+\nu;\mu(z-1))dz\right]}.
\end{aligned}
\end{equation}
This implies that the number of nascent mRNA is independent of the numbers of mature mRNA and protein, while the numbers of mature mRNA and protein are correlated. Taking the derivatives of $F$ gives the joint distributions for the nascent mRNA, mature mRNA, and protein numbers. In particular, taking $x=y=1$ and $x_\star=y=1$, we obtain
\begin{equation*}
F(x_\star,1,1) = e^{a(x_\star-1)},\;\;\;F(1,x,1) = e^{b(x-1)}.
\end{equation*}
This shows that the numbers of nascent and mature mRNAs both have a Poisson distribution:
\begin{equation*}
p^{M^{\star}}_{m_\star} = \frac{a^{m_\star}}{m_\star!}e^{-a},\;\;\;p^M_m = \frac{b^m}{m!}e^{-b}.
\end{equation*}
Taking $x_\star=x=1$, we obtain
\begin{equation}\label{premrnaexpresion}
F(1,1,y) = e^{b\mu\int_1^y{}_{1}F_1(1;1+\nu;\mu(z-1))dz}.
\end{equation}
It then follows from Eq. \eqref{marginal} that the marginal distribution for the protein number is given by
\begin{equation}\label{expression1}
p_n^P = \frac{B_{n}(g_1,\dots,g_{n})}{n!}e^{-b\mu\int_0^1{}_{1}F_1(1;1+\nu;\mu(z-1))dz},
\end{equation}
where $B_n$ is the complete Bell polynomial and
\begin{equation*}
g_i = \frac{b\mu^{i}(i-1)!}{(1+\nu)_{i-1}}{}_{1}F_1(i;i+\nu;-\mu),\;\;\; i=1,\dots,n.
\end{equation*}

Our analytical results can also be used to compute the correlation coefficient $\rho_{M,P}$ between the mature mRNA and protein numbers. Combining Eqs. \eqref{correlation} and \eqref{preexpression}, it is easy to obtain
\begin{equation*}
\rho_{M,P} = \sqrt{\frac{\mu}{(1+\nu)(1+\mu+\nu)}}.
\end{equation*}
This shows that the mature mRNA and protein numbers are always positively correlated; the correlation is strong when the translation rate $u$ is large and the mature mRNA degradation rate $v$ is small compared to the protein decay rate $d$.

We next take a deeper look at the marginal protein distribution. It is a classical result that if mRNA decays much faster than protein, then the protein number has a negative binomial distribution \cite{shahrezaei2008analytical}. In fact, this assumption holds for the majority of genes in bacteria and yeast, but it fails for many genes in higher prokaryotes, where mRNA and protein often decay at the same time scale (see Table S1 in \cite{jia2021frequency} for such time scales in various cell types). Here we consider another important case where mature mRNA decays much slower compared to protein. Specifically, we consider the limiting case of $\nu = v/d \ll 1$, while keeping $b = ks/(k+f)v$ as constant. In this limit, the synthesis and degradation of mature mRNA are both very slow and thus the mature mRNA is a slow variable. Actually, a similar limit has been considered in \cite{bokes2012exact} where the nascent mRNA is not modeled explicitly; here we take a deeper look at  this limit. Since $\nu\ll 1$, we have ${}_{1}F_1(1;1+\nu;\mu(z-1)) \approx {}_{1}F_1(1;1;\mu(z-1)) = e^{\mu(z-1)}$. Then the generating function in Eq. \eqref{premrnaexpresion} can be simplified as
\begin{equation*}
F(1,1,y) = e^{b\left[e^{\mu(y-1)}-1\right]}.
\end{equation*}
Taking the derivatives of the generating function $F(1,1,y)$ at $y=0$, we find that the protein number has the following mixed Poisson distribution with Poissonian weights:
\begin{equation}\label{mixture2}
p^P_n = e^{-b}\delta_{0}(n)
+\sum_{k=1}^\infty\frac{b^je^{-b}}{k!}\left[\frac{\left(k\mu\right)^ne^{-k\mu}}{n!}\right],
\end{equation}
where $\delta_{0}(n)$ is Kronecker's delta function which takes the value of $1$ when $n=0$ and the value of $0$ otherwise. This can be understood intuitively as follows. We have seen that the mature mRNA number has the Poisson distribution $\Pnum(N_M = k) = b^ke^{-b}/k!$, where $N_M$ denotes the number of $M$. Since the mature mRNA is a slow variable, given that $k$ copies of mature mRNA has been produced, the total synthesis rate of protein is given by $ku$ and thus the conditional distribution of the protein number is also Poissonian:
\begin{equation*}
\Pnum(N_P = n|N_M = k) = \frac{\left(k\mu\right)^n e^{-k\mu}}{n!},
\end{equation*}
where $N_P$ denotes the number of $P$. Hence the mixed Poisson distribution given in Eq. \eqref{mixture2} is nothing but the formula of total probability:
\begin{equation*}
\Pnum(N_P = n) = \sum_{k=0}^\infty\Pnum(N_P = n|N_M = k)\Pnum(N_M = k).
\end{equation*}
Fig. \ref{mrnaall}(c) shows the comparison between the exact solution given in Eq. \eqref{expression1}, the approximate solution given in Eq. \eqref{mixture2}, and FSP simulations under different values of $\nu$. It can be seen that the exact and approximation solutions coincide perfectly with each other for small $\nu$, but they fail as expected for large $\nu$. When $\nu\ll 1$, the protein distribution is a mixture of Poisson distributions and thus is capable of producing multiple peaks that are located around $k\mu$, $k = 0,1,2...$ with $\mu$ being the averaged amount of protein produced by a single mature mRNA molecule. Note that only the first several Poisson components contribute to the multiple peaks of the protein distribution since the Poisson components become lower and flatter as $k$ increases (Fig. \ref{mrnaall}(c)). In the literature \cite{jia2020small}, it is widely believed that bimodality of the protein distribution has two major origins --- it can occur either when there is a positive feedback loop involved in the system or when the switching between promoter states are slow. Here we show that multimodality can also be caused by slow synthesis and degradation of mature mRNA, even when the gene is constitutively expressed (no promoter switching). As $\nu$ increases, multimodality disappears and the protein distribution becomes closer to a negative binomial distribution (Fig. \ref{mrnaall}(c)).

\subsection{Gene regulatory model with translational bursting}\label{cascadesection}
As the third application, we consider a simple gene regulatory system where the product of a gene, as a transcription factor, regulates the expression of another gene in a bursty manner (Fig. \ref{cascadeall}(a)). Let $G_1$ and $G_2$ denote the two genes and let $P_1$ and $P_2$ denote the corresponding gene products. The effective reactions describing the gene regulatory system are given by
\begin{gather*}
G_1 \xrightarrow{u_1} G_1+P_1,\;\;\; G_2+P_1 \xrightarrow{u_2p^kq}G_2+P_1+kP_2,\;\;\; k\geq 1,\\
P_1 \xrightarrow{d_1} \varnothing,\;\;\; P_2\xrightarrow{d_2}\varnothing.
\end{gather*}
Here the first reaction describes the expression of gene $G_1$ with effective translation rate $u_1$, the second reaction describes the expression of gene $G_2$ which is activated by protein $P_1$, and the last two reactions describe the degradation of the two proteins. In agreement with experiments \cite{cai2006stochastic}, the production of protein $P_2$ is assumed to occur in bursts of random size sampled from a geometric distribution with parameter $p$. Each burst is due to rapid synthesis of protein from a single, short-lived mRNA molecule; thus the effective translation rate of gene $G_2$ is the product of the corresponding transcription rate $u_2$ and the geometric distribution $p^kq$, where $q = 1-p$ \cite{jia2017simplification}. The microstate of the system can be represented by an ordered pair $(n_1,n_2)$, where $n_i$ denotes the copy number of protein $P_i$. Let $p_{n_1,n_2}$ denote the probability of observing microstate $(n_1,n_2)$ and let
\begin{equation*}
F(y_1,y_2)=\sum_{n_1,n_2}p_{n_1,n_2}y_1^{n_1} y_2^{n_2}
\end{equation*}
denote the corresponding generating function. Then the evolution of the gene regulatory system is governed by the CME
\begin{equation*}
\begin{split}
\dot{p}_{n_1,n_2} &= u_1p_{n_1-1,n_2}+\sum_{i=0}^{n_2-1}u_2p^{n_2-i}qn_1p_{n_1,i}
+d_1(n_1+1)p_{n_1+1,n_2}+d_2(n_2+1)p_{n_1,n_2+1}\\
&\quad-(u_1+u_2pn_1+d_1n_1+d_2n_2)p_{n_1,n_2},
\end{split}
\end{equation*}
where $u_2pn_1 = \sum_{k=1}^\infty u_2p^kqn_1$ in the bracket is the sum of transition rates from microstate $(n_1,n_2)$ to other microstates due to translational bursting.

To solve this CME, we next consider the modified Markovian model. Similarly, we do not take the copy number variation of the gene into account and thus the reaction $G_1\rightarrow G_1+P_1$ can be viewed as a zero-order reaction. Since the zero-order reaction can only occur at the zero microstate, it is easy to see that the irreducible state space of the modified model is given by
\begin{equation*}
\{(1,n_2),(0,n_2):\;n_2\geq 0\},
\end{equation*}
and the transition diagram restricted to the irreducible state space is illustrated in Fig. \ref{cascadeall}(b). Note that the irreducible state space of the original model is the two-dimensional lattice since $n_1$ and $n_2$ can take all nonnegative integer values. Thus the method proposed in this paper reduces a two-dimensional problem for the original model to a coupled one-dimensional problem for the modified model (here ``coupled" means that $n_1$ can only take the values of $0$ and $1$), which greatly reduces the theoretical complexity.
\begin{figure}[!htb]
\centering\includegraphics[width=1.0\textwidth]{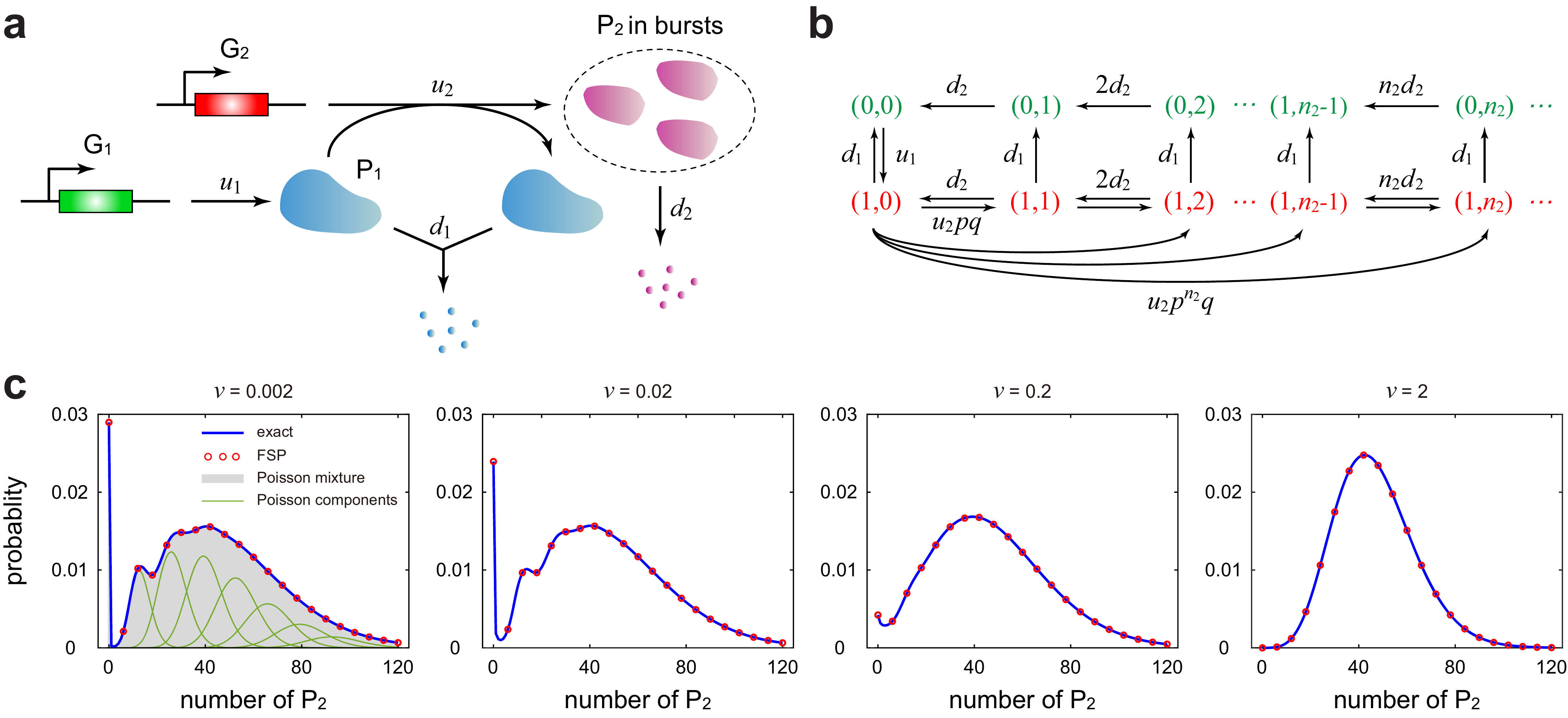}
\caption{\textbf{A gene regulatory model with translational bursting.} (a) Schematic of a simple gene regulatory model where the product of gene $G_1$ activates the expression of gene $G_2$. The protein synthesis of gene $G_2$ occurs in bursts. (b) Transition diagram for the modified model restricted to the irreducible state space. Note that translational bursting can cause jumps from microstate $(1,n_2)$ to $(1,n_2')$ with $n_2'>n_2$. This is shown for microstate $(1,0)$ in the figure but is also true for other microstates. (c) Comparison between the exact steady-state distribution for the number of protein $P_2$ given in Eq. \eqref{expression2} (blue curve) with FSP simulations (red circles) as $\nu = d_1/d_2$ varies while keeping $\mu_1 = u_1/d_1$ as constant. The left panel also compares the exact solution (blue curve) with mixed negative binomial approximation given in Eq. \eqref{mixture4} (grey region). The negative binomial components of the mixed distribution are shown by the green curves. The model parameters are chosen as $u_2=40, d_2=1, p=0.25$ and the parameters $u_1$ and $d_1$ are chosen so that $\mu_1=3.52$.}\label{cascadeall}
\end{figure}

Since the modified model is essentially one-dimensional, its generating function $H$ can be easily computed in steady-state conditions, which is given by (see Appendix D for details)
\begin{equation*}\label{gfinal3}
{\small\begin{split}
	H(y_1,y_2) =&\; \pi_{\mathbf{0}}\mu_1\bigg[{}_2F_1\left(-\mu_2,1;1+\nu;\omega(y_2)\right)(y_1-1)\\
	&\;+\mu_2 B\int_1^{y_2}{}_2F_1\left(1+\mu_2+\nu,1;1+\nu;B(z-1)\right)dz\bigg]+1,
	\end{split}}
\end{equation*}
where ${}_{2}F_1$ denotes the Gaussian hypergeometric function, $B = p/q = \sum_{n_2=1}^\infty n_2p^{n_2}q$ is the mean burst size of protein $P_2$, and
\begin{equation*}
\mu_1 = \frac{u_1}{d_1},\;\;\;\mu_2 = \frac{u_2}{d_2},\;\;\;\nu = \frac{d_1}{d_2},\;\;\;
\omega(y_2) = \frac{p(y_2-1)}{py_2-1}.
\end{equation*}
It then follows from Eq. \eqref{expression} that the generating function $F$ for the original model is given by
\begin{equation}\label{cascadeexpression}
\begin{aligned}
F(y_1,y_2) = e^{\mu_1\left[{}_2F_1\left(-\mu_2,1;1+\nu;\omega(y_2)\right)(y_1-1)+\mu_2 B\int_1^{y_2}{}_2F_1\left(1+\mu_2+\nu,1;1+\nu;B(z-1)\right)dz\right]}.
\end{aligned}
\end{equation}

Taking the derivatives of $F$ at zero yields the steady-state joint distribution for the numbers of the two proteins. In particular, taking $y_2=1$, we obtain $F(y_1,1) = e^{\mu_1(y_1-1)}$. This shows that the number of protein $P_1$ has the Poisson distribution
\begin{equation*}
p^{P_1}_{n_1} = \frac{\mu_1^{n_1}}{n_1!}e^{-\mu_1}.
\end{equation*}
Moreover, taking $y_1=1$, we obtain
\begin{equation}\label{marginal3}
F(1,y_2) = e^{\mu_1\mu_2 B\int_1^{y_2}{}_2F_1\left(1+\nu+\mu_2,1;1+\nu;B(z-1)\right)dz}.
\end{equation}
It then follows from Eq. \eqref{marginal} that the number of protein $P_2$ has the following distribution:
\begin{equation}\label{expression2}
p_{n_2}^{P_2} = \frac{B_{n_2}(g_1,\dots,g_{n_2})}{n_2!}
e^{-\mu_1\mu_2 B\int_0^{1}{}_2F_1\left(1+\nu+\mu_2,1;1+\nu;B(z-1)\right)dz},
\end{equation}
where $B_n$ is the complete Bell polynomial and
\begin{equation*}
g_i = \frac{\mu_1\mu_2B^i(1+\nu+\mu_2)_{i-1}(i-1)!}{(1+\nu)_{i-1}}
{}_{2}F_1\left(i+\nu+\mu_2,i;i+\nu;-B\right),\;\;\; i=1,\dots,n_2.
\end{equation*}

We next focus on two limiting cases. The first case occurs when protein $P_1$ decays much faster than protein $P_2$, i.e. $\nu = d_1/d_2\gg 1$, and the constant $\mu_1 = u_1/d_1$ is strictly positive and bounded. In this case, both the synthesis and degradation of protein $P_1$ are very fast and thus it can be viewed as a fast variable. When $\nu\gg 1$, we have
\begin{equation*}
{}_2F_1\left(1+\nu+\mu_2,1;1+\nu;B(z-1)\right) \approx {}_1F_0\left(1;B(z-1)\right) = (1-B(z-1))^{-1}.
\end{equation*}
It then follows from Eq. \eqref{marginal3} that
\begin{equation}
F(1,y_2) = \left(\frac{q}{1-py_2}\right)^{\mu_1\mu_2}.
\end{equation}
This shows that the number of protein $P_2$ has the negative binomial distribution
\begin{equation}\label{nb1}
p^{P_2}_{n_2} = \frac{\left(\mu_1\mu_2\right)_{n_2}}{n_2!}p^{n_2}q^{\mu_1\mu_2}.
\end{equation}

The second case occurs when protein $P_1$ decays much slower than protein $P_2$, i.e. $\nu = d_1/d_2\ll 1$, and the constant $\mu_1 = u_1/d_1$ is strictly positive and bounded. In this case, both the synthesis and degradation of protein $P_1$ are very slow and thus it can be viewed as a slow variable. When $\nu\ll 1$, we have
\begin{equation*}
{}_2F_1\left(1+\nu+\mu_2,1;1+\nu;B(z-1)\right)
\approx {}_1F_0\left(1+\mu_2;B(z-1)\right) = \left(1-B(z-1)\right)^{-(1+\mu_2)}.
\end{equation*}
Then the generating function in Eq. \eqref{marginal3} can be simplified as
\begin{equation*}
F(1,y_2) = e^{\mu_1\left[\left(\frac{q}{1-py_2}\right)^{\mu_2}-1\right]}.
\end{equation*}
Taking the derivatives of $F(1,y_2)$ at $y_2=0$, we find that the number of protein $P_2$ has the following mixed negative binomial distribution with Poissonian weights:
\begin{equation}\label{mixture4}
p^{P_2}_{n_2} = e^{-\mu_1}\delta_{0}(n_2)
+\sum_{k=1}^\infty\frac{\mu_1^ke^{-\mu_1}}{k!}\left[\frac{(k\mu_2)_{n_2}}{n_2!}p^{n_2}q^{k\mu_2}\right].
\end{equation}
This can be explained intuitively as follows. We have seen that the number of protein $P_1$ has the Poisson distribution $\Pnum(N_{P_1}=k) = \mu_1^ke^{-\mu_1}/k!$, where $N_{P_1}$ denotes the number of $P_1$. Since protein $P_1$ is a slow variable, given that $k$ copies of $P_1$ has been produced, the effective transcription rate of gene $G_2$ is given by $ku_2$ and thus the conditional distribution for the number of protein $P_2$ is negative binomial:
\begin{equation*}
\Pnum(N_{P_2}=n_2|N_{P_1}=k) = \frac{(k\mu_2)_{n_2}}{n_2!}p^{n_2}q^{k\mu_2},
\end{equation*}
where $N_{P_2}$ denotes the number of $P_2$. Thus the mixed negative binomial distribution given in Eq. \eqref{mixture4} is nothing but the formula of total probability:
\begin{equation*}
\Pnum(N_{P_2}=n_2) = \sum_{k=0}^\infty\Pnum(N_{P_2}=n_2|N_{P_1}=k)\Pnum(N_{P_1}=k).
\end{equation*}
Fig. \ref{cascadeall}(c) shows the comparison between our exact solution given in Eq. \eqref{expression2}, the approximate solution given in Eq. \eqref{mixture4}, and FSP simulations under different values of $\nu$. Clearly, the exact and approximation solutions coincide perfectly with each other for small $\nu$, but deviate significantly from each other for large $\nu$. When $\nu\ll 1$, the copy number distribution for protein $P_2$ is a mixture of negative binomials and thus can produce multiple peaks around $k\mu_2B$, $k = 0,1,2...$ with $\mu_2B$ being the averaged amount of protein $P_2$ produced by a single protein $P_1$ molecule. As $\nu$ increases, multimodality disappears and the protein distribution becomes closer to a negative binomial distribution (Fig. \ref{cascadeall}(c)).

We finally examine the correlation between the two proteins using our analytical results. It follows from Eqs. \eqref{correlation} and \eqref{cascadeexpression} that the correlation coefficient between the numbers of $P_1$ and $P_2$ is given by
\begin{equation*}
\rho_{P_1,P_2}=\sqrt{\frac{\mu_2B}{(1+\nu)[1+\nu+(1+\nu+\mu_2)B]}}.
\end{equation*}
Clearly, the numbers of the two proteins are always positively correlated; the correlation is strong when the burstiness of protein $P_2$ is large, the translation rate of protein $P_2$ is large, and the degradation rate of protein $P_1$ is small.

\subsection{Gene expression model with alternative splicing}\label{alsection}
Alternative splicing is a process during gene expression that results in a single gene coding for multiple proteins \cite{saitou2013introduction}. In this process, particular exons of a gene may be included within or excluded from the final processed mRNA that are produced from that gene. Consequently, the proteins translated from different spliced mRNAs will be different (see Fig. \ref{al1}(a) for an illustration). A gene expression model involving alternative splicing has been solved in \citep{wang2014alternative}, which considers the expression of mRNAs but not proteins. Here we take proteins into consideration.

Let $G$ denote the gene of interest, $M_\star$ denote the nascent mRNA, $M_1$ and $M_2$ denote two mRNA isoforms, and $P_1$ and $P_2$ denote the corresponding protein isoforms. Based on the central dogma of molecular biology, the effective reactions involved in the gene expression system are listed as follows:
\begin{equation}\label{alternativescheme}
\begin{aligned}
&G \xrightarrow{s} G+M_\star, \;\;\;  M_\star\xrightarrow{k_i} M_i, \;\;\;  M_i\xrightarrow{u_i}M_i+P_i,\\
&\quad M_\star\xrightarrow{f}\varnothing, \;\;\; M_i\xrightarrow{v_i}\varnothing,\;\;\; P_i
\xrightarrow{d_i}\varnothing, \;\;\; i = 1,2,
\end{aligned}
\end{equation}
where $s$ is the transcription rate, $k_i$ are the rates of alternative splicing, $u_i$ are the translation rates of the two mRNA isoforms, and $f$, $v_i$, and $d_i$ are the degradation rates of all gene products. Recently, it has been found that alternative splicing can be regulated by a system of proteins (regulators) binding to a nascent transcript that in turn direct the splicing machinery to include or skip specific exons \cite{ajith2016position-dependent, Fu2014Context}; moreover, the regulators usually exert distinct effects on exon inclusion or exclusion depending on the position of its binding \citep{ajith2016position-dependent} and thus different binding positions lead to different mRNA isoforms. Here we take this effect into account by assuming that there is a regulator $P$ which activates the formation of two mRNA isoforms $M_1$ and $M_2$ (via exon inclusion and/or exclusion). Hence the copy number of the regulator $P$, which is denoted by $n$, will influence the splicing rates $k_1=k_1(n)$ and $k_2=k_2(n)$. For simplicity, we further assume that the number of regulator has a fixed distribution that is independent of the numbers of gene products.
\begin{figure}[!htb]
\centering\includegraphics[width=1.0\textwidth]{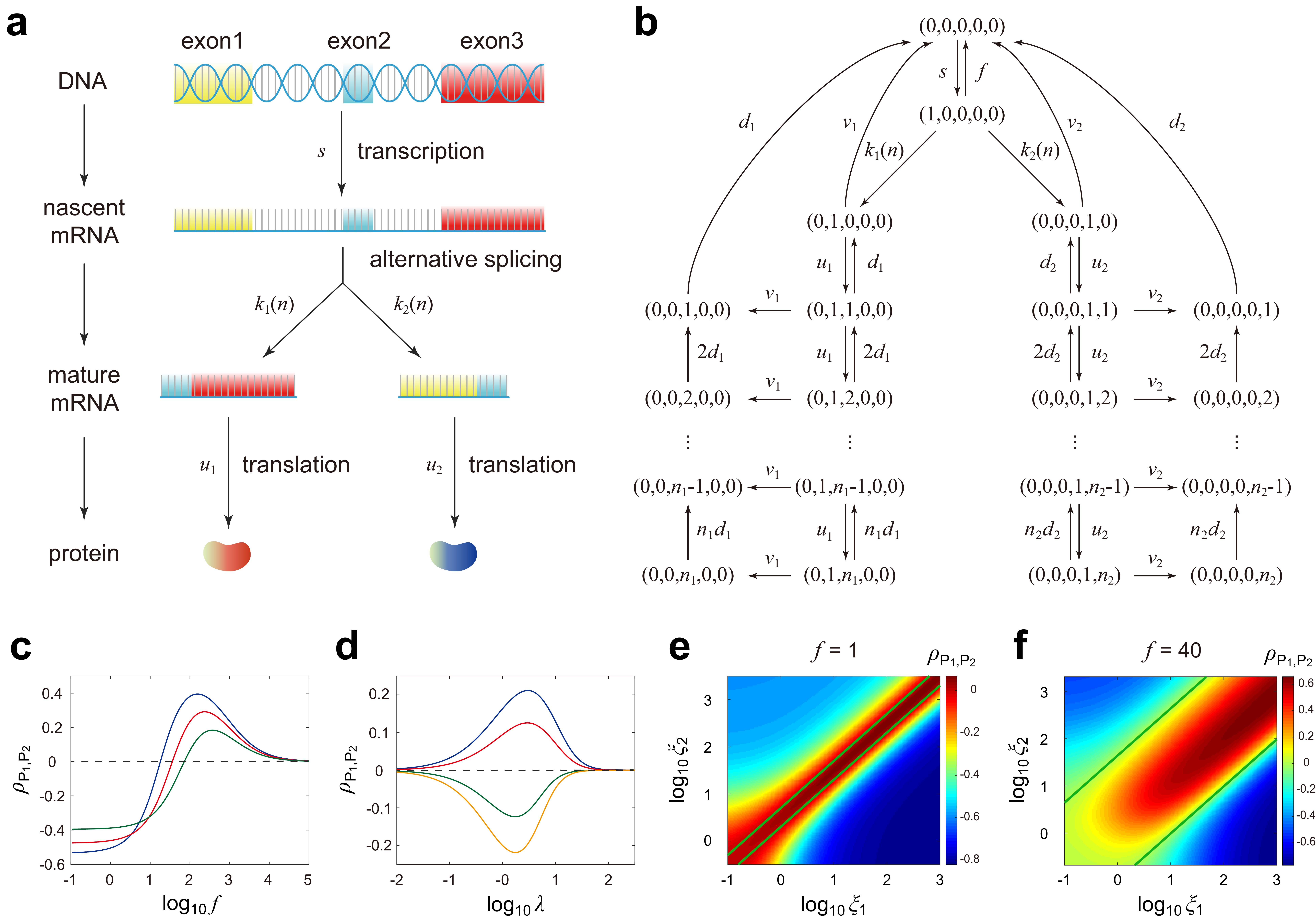}
\caption{\textbf{A multi-step gene expression model with alternative splicing.} (a) Schematic of a multi-step gene expression model involving transcription, translation, and alternative splicing. Due to alternative splicing, the nascent mRNA is spliced in two different ways to produce two mature mRNA isoforms. (b) Transition diagram for the modified model restricted to the irreducible state space. Note that the transition diagram has two branches (left and right), corresponding to the production of two mRNA/protein isoforms. Each branch has a ladder-shaped structure. (c) Correlation coefficient $\rho_{P_1,P_2}$ between the numbers of the two protein isoforms versus the degradation rate $f$ of nascent mRNA. The model parameters are chosen as $s=100, u_1=30, u_2=20, v_1=15, v_2=4, d_1=3, d_2=2, \xi_1=10, \xi_2=20, \lambda=2$. The remaining parameters are chosen as $\eta_1 = 10, \eta_2 = 3$ (blue curve), $\eta_1 = 20, \eta_2 = 6$ (red curve), and $\eta_1 = 40, \eta_2 = 12$ (green curve). (d) Correlation coefficient $\rho_{P_1,P_2}$ versus the mean $\lambda$ of the regulator number. The model parameters are chosen as $u_1=30, u_2=20, v_1=3, v_2=4, d_1=3, d_2=4, \xi_1=7, \xi_2=5, \eta_1=15, \eta_2=28$. The remaining parameters are chosen as $s = 200, f = 80$ (blue curve), $s = 100, f = 80$ (red curve), $s = 100, f = 0$ (green curve), and $s = 200, f = 0$ (orange curve). (e),(f) Correlation coefficient $\rho_{P_1,P_2}$ versus the regulation strengths $\xi_1$ and $\xi_2$. (e) Slow degradation of nascent mRNA with $f = 1$. (f) Fast degradation of nascent mRNA with $f = 40$. The model parameters are chosen as $s=100, u_1=30, u_2=20, v_1=15, v_2=4, d_1=3, d_2=2, \eta_1=1, \eta_2=4, \lambda=2$. The two green lines separate the region with positive correlation and the region with negative correlation.}\label{al1}
\end{figure}

The microstate of the system can be represented by an ordered five-tuple $(m,m_1,n_1,m_2,n_2)$: the copy number $m$ of nascent mRNA, the copy numbers $m_1$ and $m_2$ of the two mRNA isoforms, and the copy numbers $n_1$ and $n_2$ of the two protein isoforms. Let $p_{m,m_1,n_1,m_2,n_2}$ denote the probability of observing microstate $(m,m_1,n_1,m_2,n_2)$ and let
\begin{equation*}
F(x,x_1,y_1,x_2,y_2) = \sum_{m,m_1,n_1,m_2,n_2}p_{m,m_1,n_1,m_2,n_2}x^{m}x_1^{m_1}y_1^{n_1}x_2^{m_2}y_2^{n_2}
\end{equation*}
denote the corresponding generating function. Given that there are $n$ copies of regulator $P$, we can treat the splicing rates $k_1 = k_1(n)$ and $k_2 = k_2(n)$ as constants and the evolution of the gene expression model is governed by the CME
\begin{equation*}
\begin{aligned}
&\;\dot{p}_{m,m_1,n_1,m_2,n_2}\\
=&\;sp_{m-1,m_1,n_1,m_2,n_2}+k_1(m+1)p_{m+1,m_1-1,n_1,m_2,n_2}+k_2(m+1)p_{m+1,m_1,n_1,m_2-1,n_2}\\
&\;+u_1m_1p_{m,m_1,n_1-1,m_2,n_2}+u_2m_2p_{m,m_1,n_1,m_2,n_2-1}+f(m+1)p_{m+1,m_1,n_1,m_2,n_2}\\
&\;+v_1(m_1+1)p_{m,m_1+1,n_1,m_2,n_2}+v_2(m_2+1)p_{m,m_1,n_1,m_2+1,n_2}\\
&\;+d_1(n_1+1)p_{m,m_1,n_1+1,m_2,n_2}+d_2(n_2+1)p_{m,m_1,n_1,m_2,n_2+1}\\
&\;-[s+(k_1+k_2+f)m+(u_1+v_1)m_1+(u_2+v_2)m_2+d_1n_1+d_2n_2]p_{m,m_1,n_1,m_2,n_2}.
\end{aligned}
\end{equation*}

To solve this CME, we next consider the modified Markovian model, which has only one zero-order reaction. Since the zero-order reaction $G\rightarrow G+M_\star$ can only occur at the zero microstate, it is easy to see that the irreducible state space of the modified model is given by
\begin{equation*}
\{(1,0,0,0,0),(0,1,n_1,0,0),(0,0,n_1,0,0),(0,0,0,1,n_2),(0,0,0,0,n_2):\;n_1,n_2\geq 0\}.
\end{equation*}
The transition diagram restricted to the irreducible state space is illustrated in Fig. \ref{al1}(b). Clearly, the zero microstate can only transition to microstate $(1,0,0,0,0)$. If the nascent transcript $M_\star$ produces the mRNA isoform $M_1$, then the modified model enters the left branch in Fig. \ref{al1}(b); if $M_\star$ produces $M_2$, then the modified model enters the right branch. Hence our method reduces a five-dimensional problem for the original model to a coupled one-dimensional problem for the modified model. In analogy to the derivation in Section \ref{presection}, given that there are $n$ copies of regulator $P$ in a single cell, the generating function of the original model is given by (see Appendix E for details)
\begin{equation}\label{final4}
\begin{aligned}
&\; F(x,x_1,y_1,x_2,y_2|n)\\
=&\; e^{a(n)(x-1)+\sum_{i=1}^2K_i(n)b_i\left[(x_i-1){}_1F_1(1;1+\nu_i;\mu_i(y_i-1))
	+\mu_i\int_1^{y_i}{}_1F_1(1;1+\nu_i;\mu_i(z-1))dz\right]}.
\end{aligned}
\end{equation}
where
\begin{gather*}
a(n)=\frac{s}{k_1(n)+k_2(n)+f},\;\;\;
K_1(n)=\frac{k_1(n)}{k_1(n)+k_2(n)+f},\;\;\;K_2(n)=\frac{k_2(n)}{k_1(n)+k_2(n)+f},\\
b_1=\frac{s}{v_1},\;\;\;b_2=\frac{s}{v_2},\;\;\;
\mu_1=\frac{u_1}{d_1},\;\;\;\mu_2=\frac{u_2}{d_2},\;\;\;
\nu_1=\frac{v_1}{d_1},\;\;\;\nu_2=\frac{v_2}{d_2}.
\end{gather*}
Finally, when taking into account the copy number variation of regulator $P$, it follows from the total probability formula that the generating function $F$ is given by
\begin{equation*}
F(x,x_1,y_1,x_2,y_2)
=\sum_{n=0}^\infty p^P_nF(x,x_1,y_1,x_2,y_2|n),
\end{equation*}
where $p^P_n$ is the probability of observing $n$ copies of regulator in a cell. Finally, the joint distribution of all gene products can be recovered by taking the derivatives of the generating function. It is easy to see that the marginal distributions for nascent mRNA and the two mRNA isoforms are all mixed Poisson distributions with the weights being the distribution of the regulator number; however, the marginal distributions for the two protein isoforms are much more complicated.

In recent years, the correlation between different mRNA and protein species produced from a single gene by means of alternative splicing has attracted increasing attention \cite{2017Alternative, ajith2016position-dependent}. It has been shown that the numbers of two mRNA isoforms are independent of each other if they are not controlled by the regulator \cite{wang2014alternative}; moreover, transcriptional bursting (which is not considered in our current model) may lead to positive correlation between two mRNA isoforms \cite{wang2014alternative}. Here we analyze such correlation when the two mRNA isoforms are controlled by the same regulator. To do this, we assume that the splicing rates depend on the regulator number linearly as
\begin{equation*}
k_1(n) = \xi_1n+\eta_1,\;\;\;k_2(n) = \xi_2n+\eta_2,
\end{equation*}
where $\eta_i>0$, $i = 1,2$ are the spontaneous splicing rates and $\xi_i\geq 0$ characterize the strengths of regulation. Such linear dependence has been widely used in the modeling of stochastic gene regulatory networks \cite{kumar2014exact, jia2017emergent, jia2017stochastic, jia2019single}. In addition, we assume that the number of regulator has a Poisson distribution with mean $\lambda$. Under these assumptions, the correlation coefficient between (the numbers of) the two mRNA isoforms is given by (see Appendix E for details)
\begin{equation}\label{cor1}
\rho_{M_1,M_2} = \frac{\alpha_1\alpha_2}{\sqrt{(\alpha_1^2+\beta_1)(\alpha_2^2+\beta_2)}},
\end{equation}
and the correlation coefficient between the two protein isoforms is given by (see Appendix E for details)
\begin{equation}\label{cor2}
\rho_{P_1,P_2} = \frac{\alpha_1\alpha_2}{\sqrt{(\alpha_1^2+\beta_1L_1)(\alpha_2^2+\beta_2L_2})},
\end{equation}
where
\begin{gather*}
\alpha_1=\frac{\xi_2\eta_1-\xi_1\eta_2-\xi_1f}{\eta_1+\eta_2+f},\;\;\; \alpha_2=\frac{\xi_1\eta_2-\xi_2\eta_1-\xi_2f}{\eta_1+\eta_2+f},\\
L_1=\frac{1+\mu_1+\nu_1}{\mu_1(1+\nu_1)},\;\;\;
L_2=\frac{1+\mu_2+\nu_2}{\mu_2(1+\nu_2)},\;\;\;
\gamma=\frac{\eta_1+\eta_2+f}{\xi_1+\xi_2},\\
h_1={}_1F_1\left(1;\gamma+1;-\lambda\right),\;\;\; h_2={}_2F_2\left(\gamma,\gamma;\gamma+1,\gamma+1;\lambda\right)e^{-\lambda},\\
\beta_1=\frac{(\xi_1+\xi_2)(\xi_1+\alpha_1h_1)}{b_1(h_2-h_1^2)},\;\;\; \beta_2=\frac{(\xi_1+\xi_2)(\xi_2+\alpha_2h_1)}{b_2(h_2-h_1^2)},
\end{gather*}
where ${}_2F_2$ denotes the generalized hypergeometric function. In the above formulas, the parameters $\beta_1$ and $\beta_2$ depend on the parameters $h_1$ and $h_2$, which further depend on the parameter $\gamma$. In Appendix E, we have proved that the parameters $\beta_1$ and $\beta_2$, together with $h_2-h_1^2$, must be positive. Therefore, the correlation coefficients $\rho_{M_1,M_2}$ and $\rho_{P_1,P_2}$ must have the same sign and the sign is determined by the sign of $\alpha_1\alpha_2$. In particular, when the nascent mRNA decays very slowly, i.e. $f\ll 1$, we have
\begin{equation*}
\alpha_1\alpha_2 \approx -\frac{(\xi_2\eta_1-\xi_1\eta_2)^2}{(\eta_1+\eta_2)^2}.
\end{equation*}
In this case, the numbers of the two mRNA/protein isoforms are negatively correlated. On the other hand, when the nascent mRNA decays very fast, i.e. $f\gg 1$, we have $\alpha_1\alpha_2 \approx \xi_1\xi_2 > 0$. In this case, the numbers of the two mRNA/protein isoforms are positively correlated.

These results can be understood intuitively as follows. When the nascent mRNA decays very slowly, once a nascent transcript is synthesized, it can either produce an $M_1$ or an $M_2$ molecule. Thus there is strong competition between the two isoforms; the more one isoform, the less the other isoform. This results in negative correlation between them. On the other hand, when the nascent mRNA decays very fast, its molecule number relaxes to the steady-state value rapidly \cite{jia2018relaxation} and thus there is an ample supply of nascent mRNA. In this case, there is little competition between the two isoforms; the more (less) the regulator, the more (less) the two isoforms. This results in positive correlation between them.

Our results indicate that the degradation rate $f$ of nascent mRNA has a critical value
\begin{equation*}
f_c = \begin{cases}
|\xi_2\eta_1-\xi_1\eta_2|/\xi_1, &\textrm{if}\;\xi_2\eta_1-\xi_1\eta_2\geq 0,\\
|\xi_2\eta_1-\xi_1\eta_2|/\xi_2, &\textrm{if}\;\xi_2\eta_1-\xi_1\eta_2<0,
\end{cases}
\end{equation*}
and the system undergoes a stochastic bifurcation as $f$ varies. When $f<f_c$, we have $\alpha_1\alpha_2<0$ and thus the levels of the two isoforms are negatively correlated; when $f=f_c$, we have $\alpha_1\alpha_2 = 0$ and thus they are not correlated; when $f>f_c$, we have $\alpha_1\alpha_2 > 0$ and thus they are positively correlated. Note that the size of the critical value $f_c$ depends on the sizes of $\eta_1$ and $\eta_2$. As $\eta_1$ and $\eta_2$ increase, the critical value $f_c$ becomes larger. These observations coincide with stochastic simulations in Fig. \ref{al1}(c), which illustrates the correlation coefficient $\rho_{P_1,P_2}$ as a function of $f$.

The correlation between the two mRNA/protein isoforms is also influenced by the abundance of regulator. Fig. \ref{al1}(d) depicts the correlation coefficient $\rho_{P_1,P_2}$ as a function of the regulator mean $\lambda$. It can be seen that the correlation is weak when $\lambda$ is very small or very large. Interestingly, there is an optimal $\lambda$ such that $|\rho_{P_1,P_2}|$ attains its maximum. This shows that the correlation is the strongest when the regulator mean is neither too small nor too large. This can be understood intuitively as follows. It follows from Eq. \eqref{final4} that the regulator number $n$ affects the joint distribution by adjusting the three parameters $a(n)$, $K_1(n)$, and $K_2(n)$. When $\lambda\ll 1$ or $\lambda\gg 1$, the three parameters are almost invariant and thus the gene expression model under consideration behaves like a system with no regulator. This explains the weak correlation observed when $\lambda\ll 1$ or $\lambda\gg 1$. Fig. \ref{al1}(d) also shows that a larger transcription rate $s$ will enhance the correlation between the two isoforms. This is consistent with our analytical result in Eq. \eqref{cor2} since a larger value of $s$ results in smaller values of $\beta_1$ and $\beta_2$ and thus results in stronger correlation.

Furthermore, the correlation is also influenced by the regulation strengths $\xi_1$ and $\xi_2$. Fig. \ref{al1}(e),(f) illustrate the correlation coefficient $\rho_{P_1,P_2}$ as a function of $\xi_1$ and $\xi_2$ under different values of $f$, where the two green lines in each figure separate the region with positive correlation (inside the two green lines) and the region with negative correlation (outside the two green lines). One of the two green lines corresponds to the case of $\alpha_1 = 0$ and the other corresponds to the case of $\alpha_2 = 0$. Recall that the two isoforms are positively correlated when $\alpha_1\alpha_2>0$, i.e.
\begin{equation*}
\frac{\eta_2}{\eta_1+f} < \frac{\xi_2}{\xi_1} < \frac{\eta_2+f}{\eta_1}.
\end{equation*}
Therefore, in order to observe positive correlation, $\log\xi_2-\log\xi_1$ must be controlled within a belt-shaped region that becomes wider as $f$ increases (Fig. \ref{al1}(e),(f)). In the absence of regulator ($\xi_1 = \xi_2 = 0$), we have $\alpha_1 = \alpha_2 = 0$ and thus there is no correlation between the two isoforms \cite{wang2014alternative}. If only one of the two isoforms is controlled by the regulator ($\xi_1 > 0$ and $\xi_2 = 0$), we have $\alpha_1<0$ and $\alpha_2 > 0$ and thus they are negatively correlated. If both isoforms are controlled by the regulator ($\xi_1,\xi_2 > 0$), the correlation coefficient can be either positive or negative, depending on whether the degradation rate of nascent mRNA is above or below its critical value.

Finally, we make a crucial observation that the correlation between the two protein isoforms can be either weaker or stronger than that between the two mRNA isoforms, depending on the values of the parameters $L_1$ and $L_2$. Comparing Eq. \eqref{cor1} with Eq. \eqref{cor2}, we can see that the protein correlation is less than the mRNA correlation when $L_1,L_2>1$. However, when $L_1,L_2<1$, i.e. when the translation rates $u_i$ and degradation rates $v_i$ of mRNA isoforms are large compared to the degradation rates $d_i$ of protein isoforms, the protein correlation can be even greater than the mRNA correlation, which means that the translation step may even enhance the correlation between the two isoforms of the gene product.

\section{Discussion}
In this paper, we propose a novel method of computing the joint distribution for a wide class of first-order stochastic reaction networks in steady-state conditions. By allowing all zero-order reactions to occur only at the zero microstate, we simplify the Markovian model of stochastic reaction kinetics to a modified Markovian model whose transition diagram is usually much simpler than that of the original one. In many models of biological relevance, the joint distribution of the modified model can be computed analytically. Finally, the joint generating function of the original model can be recovered from that of the modified model by taking a simple exponential transformation.

While the modified model is generally simpler than the original one, it may not be analytically tractable. However, we show its analytical tractability in two special cases: (i) its irreducible state space is finite and (ii) its irreducible state space has a ladder-shaped topological structure. We provide an easily verifiable criterion for the case (i), which states that if all the first-order reactions except degradation reactions have a conservation law with positive coefficients, then the modified model must have a finite irreducible state space. We also show that the case (ii) is satisfied in many gene expression models of biological interest. Here the ladder-shaped structure results from the fact that for the modified model, we only allow zero-order reactions to occur at the zero microstate. For example, if we allow $\varnothing\rightarrow P$ to occur only at the zero microstate, then the number of $P$ can only vary between $0$ and $1$, which correspond to the two branches of the ladder-shaped structure. In fact, ladder-shaped models have been extensively studied in the literature and their generating functions are always represented by various kinds of hypergeometric functions \cite{melykuti2014equilibrium}. Hence for the case (ii), the generating function of the original model is given by the exponential of hypergeometric functions since an exponential transformation needs to be taken in our approach.

In most previous papers, the exact joint distribution is computed by first converting the CME into a system of PDEs satisfied by the joint generating function and then solving the system of PDEs using the method of characteristics. Compared with this method which often involves tedious computations, our approach greatly reduces the theoretical complexity. We then validate the effectiveness of our method by applying it to four gene expression models of biological significance. The analytical results obtained reveal some interesting biological phenomena: (i) multimodality can be caused by slow synthesis and degradation of some gene product, even when the gene is constitutively expressed; (ii) in the presence of alternative splicing, the numbers of two mRNA/protein isoforms are negatively regulated if one isoform is controlled by the regulator and the other isoform is not; (iii) if both mRNA/protein isoforms are controlled by the regulator, then their abundances can be either positively or negatively correlated, depending on whether the degradation rate of nascent mRNA is above or below its critical value; (iv) the protein isoform correlation may be even greater than the mRNA isoform correlation when the translation rates and degradation rates of mRNA isoforms are large compared to the degradation rates of protein isoforms.

We emphasize that we construct the modified model by allowing all zero-order reactions to occur only at the zero microstate. Hence, in order to apply our method, the reaction system must have at least one zero-order reaction. However, in some biological systems, there may not be a zero-order reaction involved in the system. For example, consider the following gene expression model with promoter switching \cite{peccoud1995markovian}:
\begin{equation*}
G\xrightarrow{a}G^*,\;\;\;G^*\xrightarrow{b}G,\;\;\;
G^*\xrightarrow{\rho}G^*+P,\;\;\;P\xrightarrow{d}\varnothing,
\end{equation*}
where $G$ and $G^*$ denote the inactive and active states of the promoter, respectively, and $P$ denotes the corresponding protein. Note that in this model, while the total number of genes in the two promoter states is constant, the number of genes in the active (inactive) state is not constant. Therefore, the two switching reactions, $G\xrightarrow{}G^*$ and $G^*\xrightarrow{}G$, as well as the synthesis reaction $G^*\xrightarrow{}G^*+P$, are actually first-order reactions and cannot be regarded as zero-order reactions. In this case, there are no zero-order reactions involved in the system and thus our approach can no longer be applied. This is the major limitation of our method. In the presence of promoter switching, it has been shown that the analytical solution of a gene expression model is usually represented by hypergeometric functions \cite{peccoud1995markovian, shahrezaei2008analytical, zhou2012analytical, hornos2005self, grima2012steady, vandecan2013self, kumar2014exact, bokes2015protein, jia2020small}. In our paper, we do not take promoter switching into account and show that the joint distributions for a class of gene expression models can be represented by the exponential of hypergeometric functions. The reason for this discrepancy is that promoter switching is considered for the former but is not considered for the latter.

The current method is aimed to compute the exact solution of the steady-state joint distribution of first-order reaction kinetics. If a system contains higher-order reactions, then the PDEs satisfied by the generating function involve higher-order partial derivatives and hence it is very difficult to solve these PDEs analytically. Current research work aims to develop novel methods of computing the joint distribution of higher-order stochastic reaction kinetics. We anticipate that the method developed in this paper can be combined with various approximate techniques developed recently \cite{lakatos2015multivariate, zhang2016moment, thomas2014phenotypic, cao2018linear} to solve the joint distribution of complex biochemical reaction networks and gene regulatory networks.

\section*{Acknowledgements}
C. J. acknowledges support from the NSAF grant in National Natural Science Foundation of China with grant No. U1930402. D.-Q. Jiang was supported by National Natural Science Foundation of China with grant No. 11871079.

\section*{Appendices}

\subsection*{Appendix A: Relationship between the generating functions of the two models}
Here we uncover the relationship between the generating functions of the original and modified models. Multiplying $x^n=x_1^{n_1}\dots x_N^{n_N}$ on both sides of Eq. \eqref{mastermodified} and then summing over all microstates, we obtain
\begin{equation}\label{tempappendix1}
\begin{split}
\frac{\partial H}{\partial t} =&\; \sum_{n}\left(\sum_{i=1}^N\sum_{j=1}^{r_i}\tilde{q}_{n+e_i-\nu_{ij},n}\pi_{n+e_i-\nu_{ij}}
-\sum_{i=1}^N\sum_{j=1}^{r_i}\tilde{q}_{n,n+\nu_{ij}-e_i}\pi_n\right)x^n\\
&\;+\sum_{n}\left(\sum_{j=1}^{r_0}\tilde{q}_{n-\nu_{0j},n}\pi_{n-\nu_{0j}}\right)x^n
-\sum_{n}\left(\sum_{j=1}^{r_0}\tilde{q}_{n,n+\nu_{0j}}\pi_{n}\right)x^n\\
:=&\;\textrm{I}+\textrm{II}-\textrm{III}.
\end{split}
\end{equation}
Recall that first-order reactions lead to the same transitions for the two models. It then follows from the classical result about first-order reaction systems (see Appendix A.2 in \cite{reis2018general}) that
\begin{equation*}
\textrm{I} =
\sum_{i=1}^N\sum_{j=1}^{r_i} k_{ij}\left(x^{\nu_{ij}}-x_i\right)\frac{\partial H}{\partial x_i}.
\end{equation*}
Moreover, since $\tilde{q}_{n,n+\nu_{0j}}$ is nonzero only when $n = \mathbf{0}$, it is easy to see that
\begin{equation*}
\textrm{II} = \sum_{j=1}^{r_0}k_{0j}\pi_{\mathbf{0}}x^{\nu_{0j}},\;\;\;
\textrm{III} = \sum_{j=1}^{r_0}k_{0j}\pi_{\mathbf{0}}.
\end{equation*}
Inserting the above two equations into Eq. \eqref{tempappendix1} yields
\begin{equation}\label{g}
\frac{\partial H}{\partial t} = \sum_{i=1}^N\sum_{j=1}^{r_i} k_{ij}\left(x^{\nu_{ij}}-x_i\right)\frac{\partial H}{\partial x_i}+\sum_{j=1}^{r_0} k_{0j}\pi_{\mathbf{0}}\left(x^{\nu_{0j}}-1\right).
\end{equation}
We next prove that Eq. \eqref{expression} holds. If both the original and modified models are at the steady state, then it follows from Eq. \eqref{g} that
\begin{equation*}
\begin{aligned}
&\;\sum_{i=1}^N\sum_{j=1}^{r_i} k_{ij}\left(x^{\nu_{ij}}-x_i\right)\frac{\partial}{\partial x_i}\left(e^{\frac{H-1}{\pi_{\mathbf{0}}}}\right)+\sum_{j=1}^{r_0} k_{0j}e^{\frac{H-1}{\pi_{\mathbf{0}}}}\left(x^{\nu_{0j}}-1\right)\\
=&\; \sum_{i=1}^N\sum_{j=1}^{r_i} k_{ij}\left(x^{\nu_{ij}}-x_i\right)e^{\frac{H-1}{\pi_{\mathbf{0}}}}\frac{1}{\pi_{\mathbf{0}}}\frac{\partial H}{\partial x_i}+\sum_{j=1}^{r_0} k_{0j}e^{\frac{H-1}{\pi_{\mathbf{0}}}}\left(x^{\nu_{0j}}-1\right)\\
=&\; e^{\frac{H-1}{\pi_{\mathbf{0}}}}\frac{1}{\pi_{\mathbf{0}}}\left[\sum_{i=1}^N\sum_{j=1}^{r_i} k_{ij}\left(x^{\nu_{ij}}-x_i\right)\frac{\partial H}{\partial x_i}+\sum_{j=1}^{r_0} k_{0j}\pi_{\mathbf{0}}\left(x^{\nu_{0j}}-1\right)\right] = 0.
\end{aligned}
\end{equation*}
Thus we have
\begin{equation}\label{f2}
\sum_{i=1}^N\sum_{j=1}^{r_i} k_{ij}\left(x^{\nu_{ij}}-x_i\right)\frac{\partial}{\partial x_i}\left(e^{\frac{H-1}{\pi_{\mathbf{0}}}}\right)+\sum_{j=1}^{r_0} k_{0j}e^{\frac{H-1}{\pi_{\mathbf{0}}}}\left(x^{\nu_{0j}}-1\right)=0.
\end{equation}
Comparing Eq. \eqref{f2} with Eq. \eqref{f}, we finally conclude that $F = e^{\frac{H-1}{\pi_{\mathbf{0}}}}$ in steady-state conditions.

\subsection*{Appendix B: Finiteness of the irreducible state space of the modified model}
Here we prove the following criterion: if all the first-order reactions except degradation reactions has a conservation law with positive coefficients, then the modified model must have a finite irreducible state space. To prove this criterion, we need the following lemma.
\begin{figure}[!htb]
\centering\includegraphics[width=0.5\columnwidth]{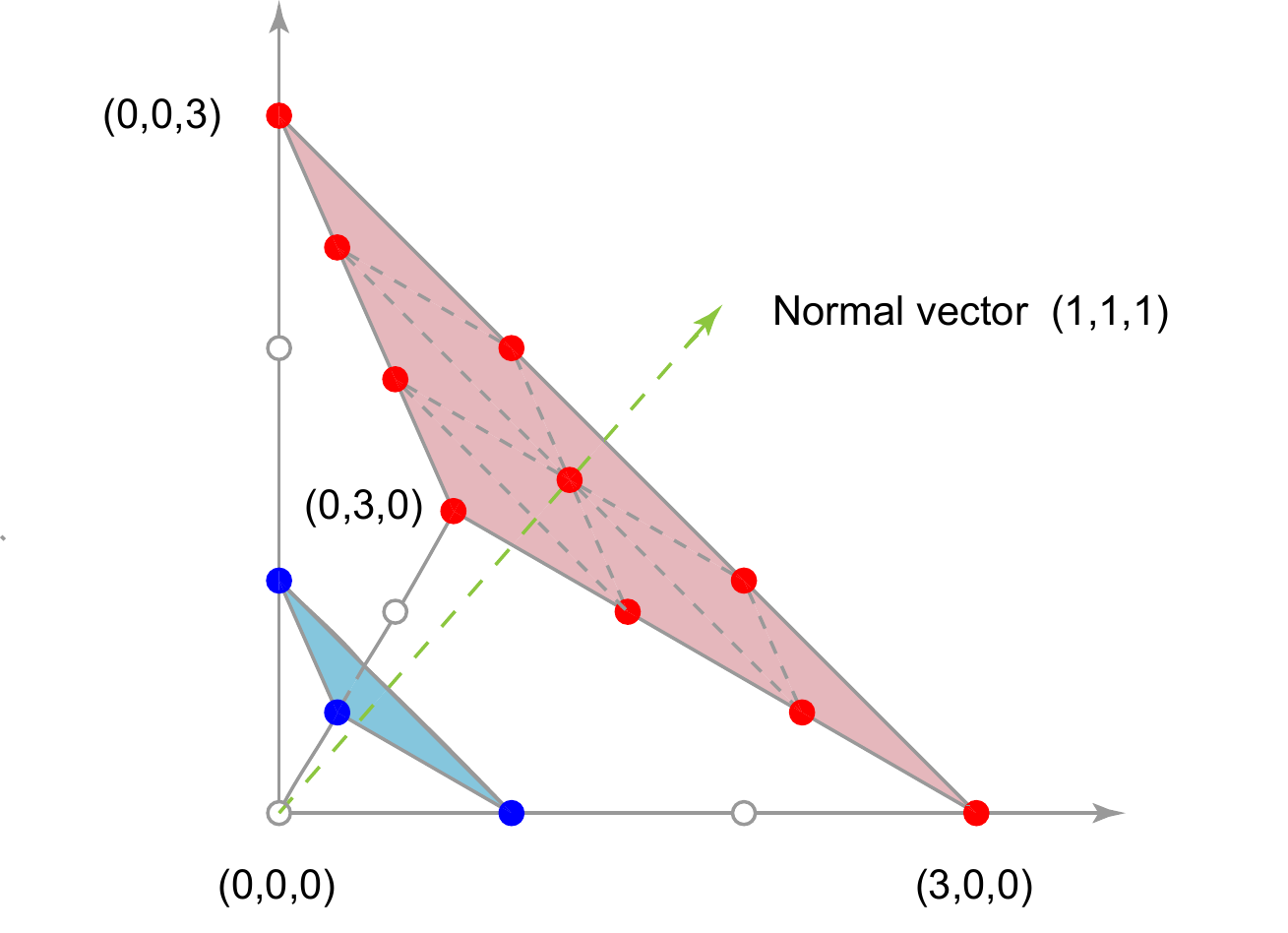}
\caption{Two hyperplanes with the same normal vector (1,1,1). The blue hyperplane contains three points in the first orthant of the lattice space and the red hyperplane contains ten points.}\label{plane}
\end{figure}

\begin{lemma}\label{lemma}
Suppose that a family of reactions
\begin{equation*}
R_i\colon \mu_i^1S_1+\dots+\mu_i^NS_N\xrightarrow{k_i} \nu^1_{i} S_1+\dots+\nu^N_{i} S_N,
\;\;\;i=1,\dots,r,
\end{equation*}
has the conservation law
\begin{equation*}
\omega_1\mu^1_i+\omega_2\mu^2_i+\dots+\omega_N\mu^N_i = \omega_1\nu^1_i+\omega_2\nu^2_i+\dots+\omega_N\nu^N_i,
\end{equation*}
for all $i=1,\dots,r$. If the coefficients $\omega_1,\cdots,\omega_N$ are all positive, then for any microstate $n$, the family of reactions can only lead microstate $n$ to a finite number of microstates.
\end{lemma}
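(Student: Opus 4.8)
The plan is to exploit the conservation law as a single linear invariant that every reaction preserves, and then to observe that a positive-coefficient invariant confines the nonnegative integer lattice to a finite set. Writing $\omega = (\omega_1,\dots,\omega_N)$, introducing the shorthand $\langle \omega, n\rangle = \omega_1 n_1 + \dots + \omega_N n_N$, and letting $d_i = \nu_i - \mu_i$ denote the reaction vector of $R_i$, the conservation-law hypothesis is exactly the statement $\langle \omega, d_i\rangle = 0$ for every $i = 1,\dots,r$. Since applying $R_i$ carries a microstate $n$ to $n + d_i$, the value of $\langle \omega, n\rangle$ is unchanged by any single reaction.

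First I would make the reachability claim precise. Any microstate $n'$ reachable from $n$ arises from a finite sequence of reaction applications, so $n' = n + \sum_k d_{i_k}$ for some indices $i_k$. Pairing with $\omega$ and using $\langle \omega, d_{i_k}\rangle = 0$ term by term yields $\langle \omega, n'\rangle = \langle \omega, n\rangle =: C$, independently of the number of reactions applied or their order. Thus every reachable microstate lies on the single hyperplane $\{x : \langle \omega, x\rangle = C\}$, which is the geometric picture illustrated in Figure \ref{plane}.

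The final step converts this hyperplane constraint into a uniform bound on each coordinate. Because microstates have nonnegative integer entries and each coefficient satisfies $\omega_i > 0$, dropping all but the $i$th term gives $\omega_i n_i' \leq \langle \omega, n'\rangle = C$, hence $0 \leq n_i' \leq C/\omega_i$ for every $i$. Each coordinate of $n'$ is therefore bounded, so all reachable microstates are confined to the finite box $\prod_{i=1}^N \{0,1,\dots,\lfloor C/\omega_i\rfloor\}$, establishing the lemma.

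I do not anticipate a genuine obstacle, as the argument is linear algebra together with an elementary counting bound. The only point demanding attention is where positivity enters: if some $\omega_i$ vanished, that coordinate would be unconstrained on the invariant hyperplane and the reachable set could be infinite, so the hypothesis that all $\omega_i$ are positive is used essentially and only at the final boxing step.
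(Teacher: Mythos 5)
Your proof is correct and follows essentially the same route as the paper's: both derive the invariance of $\omega\cdot n$ along any reachable trajectory and then use positivity of the coefficients to confine the reachable set to finitely many lattice points. Your final step is in fact slightly more explicit than the paper's (which appeals to the geometric picture of a hyperplane with positive normal vector), since you give the concrete coordinate bound $n_i'\leq C/\omega_i$ placing everything in a finite box.
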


\begin{proof}
For simplicity, we write $\mu_i = (\mu_i^1,\cdots,\mu_i^N)$ and $\nu_i = (\nu_i^1,\cdots,\nu_i^N)$. Suppose that the family of reactions lead microstate $n$ to microstate $\bar{n}$. Then there exists nonnegative integers $\xi_1,\cdots,\xi_r$ such that
\begin{equation*}
\bar{n} = n+\xi_1(\nu_1-\mu_1)+\dots+\xi_r(\nu_r-\mu_r)
\end{equation*}
with $\xi_i$ being the number of occurrence of the $i$th reaction. Then we have
\begin{equation*}
\omega\cdot\bar{n}
= \omega\cdot n+\xi_1\omega \cdot(\nu_1-\mu_1)+\dots+\xi_r\omega\cdot(\nu_r-\mu_r)
= \omega\cdot n,
\end{equation*}
where $\omega\cdot n = \omega_1n_1+\omega_2n_2+\cdots+\omega_Nn_N$ denotes the usual scalar product of vectors. This clearly shows that $\omega\cdot(\bar{n}-n) = 0$, which implies that all the microstates accessible from $n$ must lie in some hyperplane $H$ with normal vector $\omega$. Since the normal vector $\omega$ has positive components, it always points into the first orthant and thus the hyperplane $H$ can only contain a finite number of microstates within the first orthant (see Fig. \ref{plane} for an illustration). This completes the proof.
\end{proof}

We are now in a position to prove the above criterion. Since the original model is ergodic, all nonzero microstates can lead to the zero microstate via a series of first-order reactions. Since first-order reactions result in the same transitions for the original and modified models, for the modified model, all nonzero microstates can also lead to the zero microstate via a series of first-order reactions. This shows that the zero microstate is contained in the irreducible state space of the modified model. Therefore, to identify the irreducible state space of the modified model, we only need to determine which microstates are accessible from the zero microstate. First, since zero-order reactions can only occur at the zero microstate for the modified model, all zero-order reactions can only lead the zero microstate to a finite number of microstates, denoted by $n_1,\cdots,n_k$. Next, since the family of first-order reactions except degradation reactions has a conservation law with positive coefficients, it follows from Lemma \ref{lemma} that all first-order reactions can only lead microstates $n_1,\cdots,n_k$ to a finite number of microstates. This completes the proof of the criterion.

%

\subsection*{Appendix C: Joint distribution for the gene expression model with nascent mRNA}
Let $\pi_{m_\star,m,n}$ denote the steady-state probability of observing microstate $(m_\star,m,n)$ for the modified model. From the transition diagram in Fig. \ref{mrnaall}(b), these steady-state probabilities satisfy the following equations:
\begin{equation}\label{nascentme}
\left\{\begin{split}
&f\pi_{1,0,0}+v\pi_{0,1,0}+d\pi_{0,0,1}-s\pi_{0,0,0}=0,\\
&s\pi_{0,0,0}-(k+f)\pi_{1,0,0}=0,\\
&k\pi_{1,0,0}+d\pi_{0,1,1}-(u+v)\pi_{0,1,0}=0,\\
&v\pi_{0,1,n}+(n+1)d\pi_{0,0,n+1}-nd\pi_{0,0,n}=0,\;\;\;n\geq 1,\\
&u\pi_{0,1,n-1}+(n+1)d\pi_{0,1,n+1}-(nd+u+v)\pi_{0,1,n}=0,\;\;\;n\geq 1.\\
\end{split}\right.
\end{equation}
To proceed, we define the following two generating functions:
\begin{equation*}
\phi(y)=\sum_{n=0}^\infty\pi_{0,0,n}y^n,\;\;\;\psi(y)=\sum_{n=0}^\infty\pi_{0,1,n}y^n.
\end{equation*}
Then the generating function of the modified model is given by
\begin{equation}\label{gmodel1}
H(x_\star,x,y)=\pi_{1,0,0}x_\star+\phi(y)+x\psi(y).
\end{equation}
Note that Eq. \eqref{nascentme} can be converted into the following system of ODEs:
\begin{gather}
\label{eq21} k\pi_{1,0,0}+(uy-u-v)\psi(y)+d(1-y)\psi'(y) = 0,\\
\label{eq22} -k\pi_{1,0,0}+v\psi(y)+d(1-y)\phi'(y)= 0.
\end{gather}
By the second equation in Eq. \eqref{nascentme} we obtain
\begin{equation*}
\pi_{1,0,0}=a\pi_{0,0,0},
\end{equation*}
where $a=s/(k+f)$. Taking the derivative on both sides of Eq. \eqref{eq21} yields
\begin{equation*}
d(1-y)\psi''(y)+\left(uy-u-v-d\right)\psi'(y)+u\psi(y)=0.
\end{equation*}
This is a confluent hypergeometric differential equation \citep[Eq. 13.2.1]{olver2010nist} and its solution is given by
\begin{equation*}
\psi(y) = K{}_1F_1\left(1;1+\nu;\mu(y-1)\right),
\end{equation*}
where $\nu=v/d, \mu=u/d$ and $K$ is a normalization constant. Taking $y=1$ in Eq. \eqref{eq21}, we can determine the normalization constant $K$ as
\begin{equation*}
K = \psi(1) = b\pi_{0,0,0},
\end{equation*}
where $b=ks/(k+f)v$. On the other hand, it follows from Eq. \eqref{eq22} and the power series expansion of the confluent hypergeometric function that
\begin{equation}\label{technique}
\begin{split}
\phi'(y) &= \frac{b\nu\pi_{0,0,0}}{y-1}[{}_1F_1(1;1+\nu;\mu(y-1))-1]\\
&= \frac{b\nu\pi_{0,0,0}}{y-1}\sum_{i=1}^\infty \frac{(\mu(y-1))^i}{(1+\nu)_i}\\
&= \frac{b\nu\mu\pi_{0,0,0}}{1+\nu}\sum_{i=0}^\infty \frac{(\mu(y-1))^i}{(2+\nu)_{i}}\\
&= \frac{b\nu\mu\pi_{0,0,0}}{1+\nu}{}_1F_1\left(1;2+\nu;\mu(y-1)\right),
\end{split}
\end{equation}
where $(x)_i=x(x+1)\dots (x+i-1)$ is the Pochhammer symbol. Thus we obtain
\begin{equation*}
\phi(y)=\frac{b\nu\mu\pi_{0,0,0}}{1+\nu}\int_{1}^y {}_1F_1\left(1;2+\nu;\mu(z-1)\right)dz+C,
\end{equation*}
where $C$ is an undetermined constant. It then follows from Eqs. \eqref{expression} and \eqref{gmodel1} that the generating function of the original model is given by
\begin{equation*}
\begin{split}
F(x_\star,x,y) &= e^{\frac{a\pi_{0,0,0}x_\star+\phi(y)+x\psi(y)-1}{\pi_{0,0,0}}}\\
&= e^{a x_\star+\frac{b\nu\mu}{1+\nu}\int_1^y {}_1F_1(1;2+\nu;\mu(z-1))dz+bx{}_1F_1(1;1+\nu;\mu(y-1))+(C-1)/\pi_{0,0,0}}.
\end{split}
\end{equation*}
By using the fact that $F(1,1,1) = 1$, we can determined the constant $C$ and thus the generating function can be rewritten as
\begin{equation}\label{temp}
F(x_\star,x,y) = e^{a(x_\star-1)+\frac{b\nu\mu}{1+\nu}\int_1^y {}_1F_1(1;2+\nu;\mu(z-1))dz+b[x{}_1F_1(1;1+\nu;\mu(y-1))-1]}.
\end{equation}
To proceed, recall that the confluent hypergeometric function satisfies the following recurrence relation \cite[Eq. 13.3.3]{olver2010nist}:
\begin{equation}\label{tool2}
{}_1F_1\left(2;2+\nu;\mu(z-1)\right)+\nu{}_1F_1\left(1;2+\nu;\mu(z-1)\right)-(1+\nu){}_1F_1\left(1;1+\nu;\mu(z-1)\right)=0.
\end{equation}
Moreover, it follows from the differentiation formula of confluent hypergeometric functions \cite[Eq. 13.3.15]{olver2010nist} that
\begin{equation*}
\frac{d}{dz}{}_1F_1\left(1;1+\nu;\mu(z-1)\right)
= \frac{\mu}{1+\nu}{}_1F_1\left(2;2+\nu;\mu(z-1)\right).
\end{equation*}
Integrating both sides of Eq. \eqref{tool2} from $1$ to $y$, we obtain
\begin{equation}
\begin{aligned}\label{replace}
&\;\frac{\mu\nu}{1+\nu}\int_1^{y}{}_1F_1\left(1;2+\nu;\mu(z-1)\right)dz\\
=&\; \mu\int_1^{y}{}_1F_1\left(1;1+\nu;\mu(z-1)\right)dz-{}_1F_1\left(1;1+\nu;\mu(y-1)\right)+1.
\end{aligned}
\end{equation}
Finally, inserting the above equation into Eq. \eqref{temp}, we obtain Eq. \eqref{preexpression} in the main text.

\subsection*{Appendix D: Joint distribution for the gene regulatory model with translational bursting}
Let $\pi_{n_1,n_2}$ denote the steady-state probability of observing microstate $(n_1,n_2)$ for the modified model. From the transition diagram in Fig. \ref{cascadeall}(b), these steady-state probabilities satisfy the following equations:
\begin{equation}\label{cascadecme}
\left\{\begin{aligned}
&d_1\pi_{1,0}+d_2 \pi_{0,1}-u_1\pi_{0,0}=0,\\
&u_1\pi_{0,0}+d_2 \pi_{1,1}+2d_1\pi_{2,0}-(d_1+u_2p)\pi_{1,0}=0,\\
&d_1\pi_{1,n_2}+(n_2+1)d_2\pi_{0,n_2+1}-n_2d_2 \pi_{0,n_2}=0,\;\;\; n_2\geq 1,\\
&(n_2+1)d_2\pi_{1,n_2+1}+\sum_{i=0}^{n_2-1}u_2p^{n_2-i}q\pi_{1,i}-(u_2p+n_2d_2+d_1)\pi_{1,n_2}=0, \;\;\; n_2\geq 1.
\end{aligned}\right.
\end{equation}
To proceed, we define the following two generating functions:
\begin{equation*}
\phi(y_2)=\sum_{n_2=0}^\infty \pi_{0,n_2}y_2^{n_2},\;\;\; \psi(y_2)=\sum_{n_2=0}^\infty \pi_{1,n_2}y_2^{n_2}.
\end{equation*}
Then the generating function of the modified model can be written as
\begin{equation}\label{gmodel2}
H(y_1,y_2)=\phi(y_2)+y_1\psi(y_2).
\end{equation}
Note that Eq. \eqref{cascadecme} can be converted into the following system of ODEs:
\begin{gather}
\label{cascade1}u_1\pi_{0,0}+\left[\frac{u_2p(y_2-1)}{1-py_2}-d_1\right]\psi(y_2)+d_2(1-y_2)\psi'(y_2)=0.\\
\label{cascade2} -u_1\pi_{0,0}+d_2(1-y_2)\phi'(y_2)+d_1\psi(y_2) = 0.
\end{gather}
Taking the derivative on both sides of Eq. \eqref{cascade1} yields
\begin{equation*}
a(y_2)\psi''(y_2)+b(y_2)\psi'(y_2)+c(y_2)\psi(y_2)=0,
\end{equation*}
where
\begin{gather*}
a(y_2)=(py_2-1)^2(y_2-1),\\
b(y_2)=(py_2-1)[(\mu_2+\nu+1)py_2-(\mu_2p+\nu+1)],\\
c(y_2)=\mu_2p(p-1).
\end{gather*}
This is a hypergeometric differential equation and its solution is given by
\begin{equation*}
\psi(y_2) = K{}_2F_1\left(-\mu_2,1;1+\nu;\omega(y_2)\right),
\end{equation*}
where $\omega(y_2)=p(y_2-1)/(py_2-1)$ and $K$ is a normalization constant. Taking $y_2=1$ in Eq. \eqref{cascade1}, the normalization constant can be determined as
\begin{equation*}
K = \psi(1) = \mu_1\pi_{0,0}.
\end{equation*}
Next we compute $\phi(y_2)$ by using Eq. \eqref{cascade2}. On the other hand, it follows from Eq. \eqref{cascade2} and the power series expansion of the hypergeometric function that
\begin{equation*}
\begin{aligned}
\phi'(y_2) &= \frac{\mu_1\nu\pi_{0,0}}{y_2-1}[{}_2F_1\left(-\mu_2,1;1+\nu;w(y_2)\right)-1]\\
&= \frac{\mu_1\nu p\pi_{0,0}}{(py_2-1)w(y_2)}\sum_{i=1}^\infty\frac{(-\mu_2)_iw(y_2)^i}{(1+\nu)_i}\\
&= \frac{\mu_1\nu p\pi_{0,0}}{(py_2-1)}\sum_{i=0}^\infty\frac{(-\mu_2)_{i+1}w(y_2)^i}{(1+\nu)_{i+1}}\\
&= \frac{-\mu_1\mu_2\nu p\pi_{0,0}}{1+\nu}\frac{{}_2F_1(1-\mu_2,1;2+\nu;w(y_2))}{py_2-1}.
\end{aligned}
\end{equation*}
Thus we obtain
\begin{equation}\label{phi2}
\phi(y_2) = \frac{-\mu_1\mu_2\nu p\pi_{0,0}}{1+\nu}
\int_1^{y_2}\frac{{}_2F_1\left(1-\mu_2,1;2+\nu;w(z)\right)}{pz-1}dz+C,
\end{equation}
where $C$ is an undetermined constant. It then follows from Eqs. \eqref{expression} and \eqref{gmodel2} that the generating function of the original model is given by
\begin{equation}\label{temp2}
\begin{aligned}
F(y_1,y_2) &= e^{\frac{\phi(y_2)+y_1\psi(y_2)-1}{\pi_{0,0}}}\\
&= e^{\mu_1y_1{}_2F_1(-\mu_2,1;1+\nu;\omega(y_2))-\frac{\mu_1\mu_2\nu p}{1+\nu}\int_1^{y_2}
	\frac{{}_2F_1\left(1-\mu_2,1;2+\nu;\omega(z)\right)}{pz-1}dz+(C-1)/\pi_{0,0}}.
\end{aligned}
\end{equation}
By using the fact that $F(1,1) = 1$, we can determined the constant $C$ and thus the generating function can be rewritten as
\begin{equation}\label{temp3}
F(y_1,y_2) = e^{\mu_1[y_1{}_2F_1(-\mu_2,1;1+\nu;\omega(y_2))-1]-\frac{\mu_2\mu_1\nu p}{1+\nu}\int_1^{y_2}
	\frac{{}_2F_1\left(1-\mu_2,1;2+\nu;\omega(z)\right)}{pz-1}dz}.
\end{equation}
To proceed, recall that the hypergeometric function satisfies the following recurrence relation \cite[Eqs. 15.5.13 and 15.5.15]{olver2010nist}:
\begin{equation}\label{cascade10}
\begin{split}
&\nu{}_2F_1\left(1-\mu_2,1;2+\nu;\omega(z)\right)-(1+\nu){}_2F_1\left(-\mu_2,1;1+\nu;\omega(z)\right)\\
&+(1-\omega(z)){}_2F_1\left(1-\mu_2,2;2+\nu;\omega(z)\right)=0.\\
\end{split}
\end{equation}
Since
\begin{equation*}
\omega'(z) = \frac{p(1-\omega(z))}{pz-1},
\end{equation*}
multiplying $\mu_1p/(pz-1)$ on both sides of Eq. \eqref{cascade10} yields
\begin{equation}\label{tool3}
\begin{split}
&\mu_1\nu p\frac{{}_2F_1\left(1-\mu_2,1;2+\nu;\omega(z)\right)}{pz-1}
-(1+\nu)\mu_1p\frac{\;{}_2F_1\left(-\mu_2,1;1+\nu;\omega(z)\right)}{pz-1}\\
&+\mu_1\omega'(z)\;{}_2F_1\left(1-\mu_2,2;2+\nu;\omega(z)\right)=0.\\
\end{split}
\end{equation}
Moreover, it follows from the differentiation formula of Gaussian hypergeometric functions \cite[Eq.
15.5.1]{olver2010nist} that
\begin{equation*}
\frac{d}{dz}{}_2F_1\left(-\mu_2,1;1+\nu;\omega(z)\right)
= \frac{-\mu_2}{1+\nu}\omega'(z){}_2F_1\left(1-\mu_2,2;2+\nu;\omega(z)\right).
\end{equation*}
Integrating both sides of Eq. \eqref{tool3} from $1$ to $y_2$, we obtain
\begin{equation*}
\begin{aligned}
&\; \frac{\mu_1\mu_2\nu p}{1+\nu}\int_1^{y_2}\frac{{}_2F_1\left(1-\mu_2,1;2+\nu;\omega(z)\right)}{pz-1}dz\\
=&\; \mu_1\mu_2p\int_1^{y_2}\frac{{}_2F_1\left(-\mu_2,1;1+\nu;\omega(z)\right)}{pz-1}dz
+\mu_1{}_2F_1\left(-\mu_2,1;1+\nu;\omega(y_2)\right)-\mu_1.
\end{aligned}
\end{equation*}
Inserting the above equation into Eq. \eqref{temp3}, we obtain
\begin{equation*}
F(y_1,y_2) = e^{\mu_1{}_2F_1\left(-\mu_2,1;1+\nu;\omega(y_2)\right)(y_1-1)-\mu_1\mu_2 p\int_1^{y_2}\frac{{}_2F_1\left(-\mu_2,1;1+\nu;\omega(z)\right)}{pz-1}dz}.
\end{equation*}
Finally, using the Kummer's transformation \cite[Eq. 15.5.1]{olver2010nist}, we obtain Eq. \eqref{cascadeexpression} in the main text.

\subsection*{Appendix E: Joint distribution for the gene expression model with alternative splicing}\label{alappendix}
Let $\pi_{m,m_1,n_1,m_2,n_2}$ denote the steady-state probability of observing microstate $(m,m_1,n_1,m_2,n_2)$ for the modified model. Given that there are $n$ copies of the regulator, these steady-state probabilities satisfy the following equations:
\begin{equation}\label{aleq}
\left\{\begin{aligned}
&f\pi_{1,0,0,0,0}+v_1\pi_{0,1,0,0,0}+v_2\pi_{0,0,0,1,0}+d_1\pi_{0,0,1,0,0}+d_2\pi_{0,0,0,0,1}-s\pi_{\mathbf{0}}=0,\\
&s\pi_{0,0,0,0,0}-(k_1(n)+k_2(n)+f)\pi_{1,0,0,0,0}=0,\\
&k_1(n)\pi_{1,0,0,0,0}+d_1\pi_{0,1,1,0,0}-(u_1+v_1)\pi_{0,1,0,0,0}=0,\\
&k_2(n)\pi_{1,0,0,0,0}+d_2\pi_{0,0,0,1,1}-(u_2+v_2)\pi_{0,0,0,1,0}=0,\\
&v_1\pi_{0,1,n_1,0,0}+(n_1+1)d_1\pi_{0,0,n_1+1,0,0}-n_1d_1\pi_{0,0,n_1,0,0}=0,\;\;\;n_1\geq 1,\\
&u_1\pi_{0,1,n_1-1,0,0}+(n_1+1)d_1\pi_{0,0,n_1+1,0,0}-(n_1d_1+u_1+v_1)\pi_{0,1,n_1,0,0}=0,\;\;\;n_1\geq 1,\\
&v_2\pi_{0,0,0,1,n_2}+(n_2+1)d_2\pi_{0,0,0,0,n_2+1}-n_2d_2\pi_{0,0,0,0,n_2}=0,\;\;\;n_2\geq 1,\\
&u_2\pi_{0,0,0,1,n_2-1}+(n_2+1)d_2\pi_{0,0,0,0,n_2+1}-(n_2d_2+u_2+v_2)\pi_{0,0,0,1,n_2}=0,\;\;\;n_2\geq 1.\\
\end{aligned}\right.
\end{equation}
To proceed, we define the following generating functions:
\begin{equation*}
\begin{aligned}
\phi_1(y_1)&=\sum_{n_1=0}^\infty\pi_{0,0,n_1,0,0}y_1^{n_1},\;\;\;\psi_1(y_1)=\sum_{n_1=0}^\infty\pi_{0,1,n_1,0,0}y_1^{n_1},\\ \phi_2(y_2)&=\sum_{n_2=1}^\infty\pi_{0,0,0,0,n_2}y_2^{n_2},\;\;\;\psi_2(y_2)=\sum_{n_2=0}^\infty\pi_{0,0,0,1,n_2}y_2^{n_2}.
\end{aligned}
\end{equation*}
Then, given that there are $n$ copies of the regulator, the generating function of the modified model is given by
\begin{equation}\label{gmodel3}
H(x,x_1,y_1,x_2,y_2|n)=\pi_{1,0,0,0,0}x+\phi_1(y_1)+\psi_1(y_1)x_1+\phi_2(y_2)+\psi_2(y_2)x_2.
\end{equation}
Note that Eq. \eqref{aleq} can be converted into the following system of ODEs:
\begin{gather}
\label{equation51}\pi_{1,0,0,0,0}k_i(n)+(u_iy_i-u_i-v_i)\psi_i(y_i)+d_i(1-y_i)\psi_i'(y_i)=0,\\
\label{equation52}-\pi_{1,0,0,0,0}k_i(n)+v_i\psi_i(y_i)+d_i(1-y_i)\phi_i'(y_i)=0,
\end{gather}
for $i=1,2$. By the second equation in Eq. \eqref{aleq} we obtain
\begin{equation*}
\pi_{1,0,0,0,0} = a(n)\pi_{\mathbf{0}},
\end{equation*}
where $a(n)=s/(k_1(n)+k_2(n)+f)$. Note that Eqs. \eqref{equation51} and \eqref{equation52} have a similar form as Eqs. \eqref{eq21} and \eqref{eq22}. By using the same procedure used for solving Eqs. \eqref{eq21} and \eqref{eq22}, we obtain
\begin{gather*}
\psi_i(y_i) = K_i(n)b_i\pi_{\mathbf{0}}{}_1F_1\left(1;1+\nu_i;\mu(y_i-1)\right),\\
\phi_i(y_i) = \frac{K_i(n)b_i\nu_i\mu_i\pi_{\mathbf{0}}}{1+\nu_i}
\int_{1}^{y_i}{}_1F_1\left(1;2+\nu_i;\mu_i(z-1)\right)dz+C_i,
\end{gather*}
where $C_i$ are two undetermined constants and
\begin{gather*}
K_1(n)=\frac{k_1(n)}{k_1(n)+k_2(n)+f},\;\;\;K_2(n)=\frac{k_2(n)}{k_1(n)+k_2(n)+f},\\
b_1=\frac{s}{v_1},\;\;\;b_2=\frac{s}{v_2},\;\;\;\mu_1=\frac{u_1}{d_1},\;\;\;\mu_2=\frac{u_2}{d_2},\;\;\;\nu_1=\frac{v_1}{d_1},\;\;\;\nu_2=\frac{v_2}{d_2}.
\end{gather*}
It thus follows from Eqs. \eqref{expression} and \eqref{gmodel3} that the generating function of the original model, given that there are $n$ copies of the regulator, is given by
\begin{equation*}
F(x,x_1,y_1,x_2,y_2|n)=e^{a(n)(x-1)+\sum_{i=1}^2K_i(n)b_i\left[x_i{}_1F_1(1;1+\nu_i;\mu_i(y_i-1))
	+\frac{\mu_i\nu_i}{1+\nu_i}\int_1^{y_i}{}_1F_1(1;2+\nu_i;\mu_i(z-1))dz\right]}.
\end{equation*}
Replacing $\mu$, $\nu$, and $y$ in Eq. \eqref{replace} by $\mu_i$, $\nu_i$, and $y_i$ for $i = 1,2$ and inserting the resulting two equations into the above equation give Eq. \eqref{final4} in the main text.

Next we compute the correlation coefficients between the copy numbers of the two mRNA/protein isoforms under the assumption that the copy number of the regulator has a Poisson distribution with parameter $\lambda$. In this case, the generating function of the original model is given by
\begin{equation*}
\begin{aligned}
&F(x,x_1,y_1,x_2,y_2)\\
&=\sum_{n=0}^\infty\frac{\lambda^ne^{-\lambda}}{n!} e^{a(n)(x-1)+\sum_{i=1}^2K_i(n)b_i\left[(x_i-1){}_1F_1(1;1+\nu_i;\mu_i(y_i-1))+\mu_i\int_1^{y_i}{}_1F_1(1;1+\nu_i;\mu_i(z-1))dz\right]}.
\end{aligned}
\end{equation*}
We first focus on the correlation between the two mRNA isoforms. Using the power series expansion and the Kummer transformation \cite[Eq. 13.2.39]{olver2010nist} of confluent hypergeometric functions, the derivative of $F$ with respect to $x_i$ is given by
\begin{equation}\label{technique1}
\begin{aligned}
\frac{\partial F}{\partial x_i}\left(1,1,1,1,1\right)&=\sum_{n=0}^\infty K_i(n)b_i\frac{\lambda^ne^{-\lambda}}{n!}=\frac{b_i}{\xi_1+\xi_2}\sum_{n=0}^\infty \left(\xi_i+\frac{\alpha_i\gamma}{n+\gamma}\right)\frac{\lambda^{n}e^{-\lambda}}{n!}\\
&=\frac{b_i}{\xi_1+\xi_2}\sum_{n=0}^\infty \left(\xi_i+\frac{\alpha_i(\gamma)_n}{(\gamma+1)_n}\right)\frac{\lambda^{n}e^{-\lambda}}{n!}
=\frac{b_i\left(\xi_i+\alpha_ih_1\right)}{\xi_1+\xi_2},
\end{aligned}
\end{equation}
and the second derivative of $F$ with respect to $x_i$ and $x_j$ is given by
\begin{equation}\label{technique2}
\begin{aligned}
\frac{\partial^2 F}{\partial x_i \partial x_j}(1,1,1,1,1)& =\sum_{n=0}^\infty K_i(n)K_j(n)b_ib_j\frac{\lambda^{n}e^{-\lambda}}{n!}\\
&=\frac{b_ib_j}{\left(\xi_1+\xi_2\right)^2}\sum_{n=0}^\infty \left[\xi_i\xi_j+\frac{(\alpha_i\xi_j+\alpha_j\xi_i)(\gamma)_n}{(\gamma+1)_n}
+\frac{\alpha_i\alpha_j(\gamma)_n(\gamma)_n}{(\gamma+1)_n(\gamma+1)_n}\right]\frac{\lambda^{n}e^{-\lambda}}{n!}\\
&=\frac{b_ib_j\left[\xi_i\xi_j+\left(\alpha_i\xi_j+\alpha_j\xi_i\right)h_1+\alpha_i\alpha_jh_2\right]}
{\left(\xi_1+\xi_2\right)^2},
\end{aligned}
\end{equation}
where
\begin{gather*}
\alpha_1 = \frac{\xi_2\eta_1-\xi_1\eta_2-\xi_1f}{\eta_1+\eta_2+f},\;\;\;
\alpha_2 = \frac{\xi_1\eta_2-\xi_2\eta_1-\xi_2f}{\eta_1+\eta_2+f},\;\;\;
\gamma = \frac{\eta_1+\eta_2+f}{\xi_1+\xi_2},\\
h_1 = {}_1F_1\left(1;\gamma+1;-\lambda\right),\;\;\;
h_2 = {}_2F_2\left(\gamma,\gamma;\gamma+1,\gamma+1;\lambda\right)e^{-\lambda}.
\end{gather*}
Inserting the above two equations into Eq. \eqref{correlation}, we obtain Eq. \eqref{cor1} in the main text. We next focus on the correlation between the two protein isoforms. Using the power series expansion
and the Kummer transformation \cite[Eq. 13.2.39]{olver2010nist} of confluent hypergeometric functions, it is not hard to prove that
\begin{equation*}
\begin{aligned}
\frac{\partial F}{\partial y_i}(1,1,1,1,1)
&=\sum_{n=0}^\infty K_i(n)\mu_ib_i\frac{\lambda^ne^{-\lambda}}{n!}
=\frac{\mu_ib_i\left(\xi_i+\alpha_ih_1\right)}{\xi_1+\xi_2}.
\end{aligned}
\end{equation*}
Similarly, the second derivative of $F$ with respect to $y_1$ and $y_2$ is given by
\begin{equation*}
\begin{aligned}
\frac{\partial^2 F}{\partial y_1\partial y_2}(1,1,1,1,1)
&=\sum_{n=0}^\infty K_1(n)K_2(n)\mu_1\mu_2b_1b_2\frac{\lambda^ne^{-\lambda}}{n!}\\
&=\frac{\mu_1\mu_2b_1b_2\left[\xi_1\xi_2+\left(\alpha_1\xi_2+\alpha_2\xi_1\right)h_1
	+\alpha_1\alpha_2h_2\right]}{(\xi_1+\xi_2)^2},
\end{aligned}
\end{equation*}
and the second derivative of $F$ with respect to $y_i$ is given by
\begin{equation*}
\begin{aligned}
\frac{\partial^2 F}{\partial y_i^2}(1,1,1,1,1)
&=\sum_{n=0}^\infty \left[\left(K_i(n)\mu_ib_i\right)^2+K_i(n)\mu_ib_i\frac{\mu_i}{1+\nu_i}\right]
\frac{\lambda^ne^{-\lambda}}{n!}\\
&=\frac{\mu^2_ib_i^2\left(\xi_i^2+2\alpha_i\xi_ih_1+\alpha_i^2h_2\right)}
{(\xi_1+\xi_2)^2}+\frac{\mu^2_ib_i\left(\xi_i+\alpha_ih_1\right)}{(\xi_1+\xi_2)(1+\nu_i)}.
\end{aligned}
\end{equation*}
Inserting the above three equations into Eq. \eqref{correlation} gives Eq. \eqref{cor2} in the main text.

Finally we prove that $\xi_i+\alpha_ih_1$ and $h_2-h_1^2$ are positive for any choice of rate constants. First, we note that
\begin{gather*}
\frac{\alpha_i}{\xi_i} = \frac{\eta_i(\xi_1+\xi_2)}{\xi_i(\eta_1+\eta_2+f)}-1 > -1,\\
h_1 = e^{-\lambda}{}_1F_1\left(\gamma;\gamma+1;\lambda\right)
= e^{-\lambda}\sum_{n=0}^\infty \frac{\left(\gamma\right)_n}{\left(\gamma+1\right)_{n}}
\frac{\lambda^n}{n!} < 1.
\end{gather*}
Combining the above inequalities shows that $\xi_i+\alpha_ih_1>0$. Second, it follows from the Cauchy product formula of two infinite series that
\begin{equation}\label{cauchy}
\begin{aligned}
h_2-h_1^2&=e^{-2\lambda}\left[e^{\lambda}{}_{2}F_2(\gamma,\gamma;\gamma+1,\gamma+1;\lambda)
-\left({}_1F_1(\gamma;\gamma+1;\lambda)\right)^2\right]\\
&=e^{-2\lambda}\left[\sum_{n=0}^\infty \frac{\lambda^n}{n!}\sum_{n=0}^\infty\left(\frac{\gamma}{\gamma+n}\right)^2 \frac{\lambda^n}{n!}-\left(\sum_{n=0}^\infty \frac{\gamma}{\gamma+n}\frac{\lambda^n}{n!}\right)^2\right]\\
&=e^{-2\lambda}\gamma^2\left[\sum_{n=0}^\infty\frac{\lambda^n}{n!}
\sum_{i=0}^n\binom{i}{n}\left[\frac{1}{(\gamma+i)^2}-\frac{1}{(\gamma+i)(\gamma+n-i)}\right]\right].
\end{aligned}
\end{equation}
We next prove that
\begin{equation}\label{proof}
\sum_{i=0}^n\binom{i}{n}\left[\frac{1}{(\gamma+i)^2}-\frac{1}{(\gamma+i)(\gamma+n-i)}\right] > 0,
\end{equation}
for any $\gamma>0$ and $n> 0$. Putting the first term and the last term in the left-hand size of Eq. \eqref{proof} together yields
\begin{equation*}
\begin{aligned}
&\;\left[\frac{1}{\gamma^2}-\frac{1}{\gamma(\gamma+n)}\right]
+\left[\frac{1}{(\gamma+n)^2}-\frac{1}{\gamma(\gamma+n)}\right]\\
=&\;\frac{n}{\gamma^2(\gamma+n)}-\frac{n}{(\gamma+n)^2\gamma}=\frac{n^2}{\gamma^2(\gamma+n)^2}> 0.
\end{aligned}
\end{equation*}
Similarly, putting the second term and the last but one term together gives
\begin{equation*}
\begin{aligned}
&\;n\left[\frac{1}{(\gamma+1)^2}-\frac{1}{(\gamma+1)(\gamma+n-1)}\right]
+n\left[\frac{1}{(\gamma+n-1)^2}-\frac{1}{(\gamma+1)(\gamma+n-1)}\right]\\
=&\;n\left[\frac{n-2}{(\gamma+1)^2(\gamma+n-1)}-\frac{n-2}{(\gamma+n-1)^2(\gamma+1)}\right]
=\frac{n(n-2)^2}{(\gamma+1)^2(\gamma+n-1)^2} > 0.
\end{aligned}
\end{equation*}
If $n$ is an odd number, then repeating the above procedure shows that the left-hand size of Eq. \eqref{proof} is positive. If $n$ is an even number, then the $(n/2+1)$th term in the left-hand size of Eq. \eqref{proof} cannot be paired in the above manner. However, in this case it is easy to check the $(n/2+1)$th term must equal zero. Thus we have proved Eq. \eqref{proof}. Combining Eqs. \eqref{cauchy} and \eqref{proof} finally shows that $h_2-h_1^2>0$.

\setlength{\bibsep}{5pt}
\small\bibliographystyle{nature}

\begin{thebibliography}{0}
\expandafter\ifx\csname natexlab\endcsname\relax\def\natexlab#1{#1}\fi
\expandafter\ifx\csname url\endcsname\relax
  \def\url#1{\texttt{#1}}\fi
\expandafter\ifx\csname urlprefix\endcsname\relax\def\urlprefix{URL }\fi

\end{thebibliography}


\begin{thebibliography}{60}
\expandafter\ifx\csname natexlab\endcsname\relax\def\natexlab#1{#1}\fi
\expandafter\ifx\csname url\endcsname\relax
  \def\url#1{\texttt{#1}}\fi
\expandafter\ifx\csname urlprefix\endcsname\relax\def\urlprefix{URL }\fi

\bibitem[{Anderson \& Kurtz(2015)}]{anderson2015stochastic}
Anderson, D.~F. \& Kurtz, T.~G.
\newblock \emph{{Stochastic Analysis of Biochemical Systems}} (Springer, 2015).

\bibitem[{Qian \& Elson(2002)}]{qian2002single}
Qian, H. \& Elson, E.~L.
\newblock Single-molecule enzymology: stochastic Michaelis--Menten kinetics.
\newblock \emph{Biophys. Chem.} \textbf{101}, 565--576 (2002).

\bibitem[{Jia \emph{et~al.}(2012)Jia, Liu, Qian, Jiang \&
  Zhang}]{jia2012kinetic}
Jia, C., Liu, X.-F., Qian, M.-P., Jiang, D.-Q. \& Zhang, Y.-P.
\newblock Kinetic behavior of the general modifier mechanism of Botts and
  Morales with non-equilibrium binding.
\newblock \emph{J. Theor. Biol.} \textbf{296}, 13--20 (2012).

\bibitem[{Paulsson(2005)}]{paulsson2005models}
Paulsson, J.
\newblock Models of stochastic gene expression.
\newblock \emph{Phys. Life Rev.} \textbf{2}, 157--175 (2005).

\bibitem[{Schnoerr \emph{et~al.}(2014)Schnoerr, Sanguinetti \&
  Grima}]{schnoerr2014complex}
Schnoerr, D., Sanguinetti, G. \& Grima, R.
\newblock The complex chemical Langevin equation.
\newblock \emph{J. Chem. Phys.} \textbf{141}, 07B606\_1 (2014).

\bibitem[{Holehouse \emph{et~al.}(2020)Holehouse, Sukys \&
  Grima}]{holehouse2020stochastic}
Holehouse, J., Sukys, A. \& Grima, R.
\newblock Stochastic time-dependent enzyme kinetics: closed-form solution and
  transient bimodality.
\newblock \emph{J. Chem. Phys.} \textbf{153}, 164113 (2020).

\bibitem[{Peccoud \& Ycart(1995)}]{peccoud1995markovian}
Peccoud, J. \& Ycart, B.
\newblock Markovian modeling of gene-product synthesis.
\newblock \emph{Theor. Popul. Biol.} \textbf{48}, 222--234 (1995).

\bibitem[{Shahrezaei \& Swain(2008)}]{shahrezaei2008analytical}
Shahrezaei, V. \& Swain, P.~S.
\newblock Analytical distributions for stochastic gene expression.
\newblock \emph{Proc. Natl. Acad. Sci. USA} \textbf{105}, 17256--17261 (2008).

\bibitem[{Zhou \& Zhang(2012)}]{zhou2012analytical}
Zhou, T. \& Zhang, J.
\newblock Analytical results for a multistate gene model.
\newblock \emph{SIAM J. Appl. Math.} \textbf{72}, 789--818 (2012).

\bibitem[{Hornos \emph{et~al.}(2005)}]{hornos2005self}
Hornos, J. \emph{et~al.}
\newblock Self-regulating gene: an exact solution.
\newblock \emph{Phys. Rev. E} \textbf{72}, 051907 (2005).

\bibitem[{Grima \emph{et~al.}(2012)Grima, Schmidt \& Newman}]{grima2012steady}
Grima, R., Schmidt, D. \& Newman, T.
\newblock Steady-state fluctuations of a genetic feedback loop: An exact
  solution.
\newblock \emph{J. Chem. Phys.} \textbf{137}, 035104 (2012).

\bibitem[{Vandecan \& Blossey(2013)}]{vandecan2013self}
Vandecan, Y. \& Blossey, R.
\newblock Self-regulatory gene: an exact solution for the gene gate model.
\newblock \emph{Phys. Rev. E} \textbf{87}, 042705 (2013).

\bibitem[{Kumar \emph{et~al.}(2014)Kumar, Platini \& Kulkarni}]{kumar2014exact}
Kumar, N., Platini, T. \& Kulkarni, R.~V.
\newblock Exact distributions for stochastic gene expression models with
  bursting and feedback.
\newblock \emph{Phys. Rev. Lett.} \textbf{113}, 268105 (2014).

\bibitem[{Bokes \& Singh(2015)}]{bokes2015protein}
Bokes, P. \& Singh, A.
\newblock Protein copy number distributions for a self-regulating gene in the
  presence of decoy binding sites.
\newblock \emph{PloS one} \textbf{10}, e0120555 (2015).

\bibitem[{Jia \& Grima(2020{\natexlab{a}})}]{jia2020small}
Jia, C. \& Grima, R.
\newblock Small protein number effects in stochastic models of autoregulated
  bursty gene expression.
\newblock \emph{J. Chem. Phys.} \textbf{152}, 084115 (2020{\natexlab{a}}).

\bibitem[{Jia \& Grima(2020{\natexlab{b}})}]{jia2020dynamical}
Jia, C. \& Grima, R.
\newblock Dynamical phase diagram of an auto-regulating gene in fast switching
  conditions.
\newblock \emph{J. Chem. Phys.} \textbf{152}, 174110 (2020{\natexlab{b}}).

\bibitem[{M{\'e}lyk{\'u}ti \emph{et~al.}(2014)M{\'e}lyk{\'u}ti, Hespanha \&
  Khammash}]{melykuti2014equilibrium}
M{\'e}lyk{\'u}ti, B., Hespanha, J.~P. \& Khammash, M.
\newblock Equilibrium distributions of simple biochemical reaction systems for
  time-scale separation in stochastic reaction networks.
\newblock \emph{J. R. Soc. Interface} \textbf{11}, 20140054 (2014).

\bibitem[{Lakatos \emph{et~al.}(2015)Lakatos, Ale, Kirk \&
  Stumpf}]{lakatos2015multivariate}
Lakatos, E., Ale, A., Kirk, P.~D. \& Stumpf, M.~P.
\newblock Multivariate moment closure techniques for stochastic kinetic models.
\newblock \emph{J. Chem. Phys.} \textbf{143}, 094107 (2015).

\bibitem[{Zhang \emph{et~al.}(2016)Zhang, Nie \& Zhou}]{zhang2016moment}
Zhang, J., Nie, Q. \& Zhou, T.
\newblock A moment-convergence method for stochastic analysis of biochemical
  reaction networks.
\newblock \emph{J. Chem. Phys.} \textbf{144}, 194109 (2016).

\bibitem[{Thomas \emph{et~al.}(2014)Thomas, Popovic \&
  Grima}]{thomas2014phenotypic}
Thomas, P., Popovic, N. \& Grima, R.
\newblock Phenotypic switching in gene regulatory networks.
\newblock \emph{Proc. Natl. Acad. Sci. USA} \textbf{111}, 6994--6999 (2014).

\bibitem[{Cao \& Grima(2018)}]{cao2018linear}
Cao, Z. \& Grima, R.
\newblock Linear mapping approximation of gene regulatory networks with
  stochastic dynamics.
\newblock \emph{Nat. Commun.} \textbf{9}, 1--15 (2018).

\bibitem[{Krieger \& Gans(1960)}]{krieger1960first}
Krieger, I.~M. \& Gans, P.~J.
\newblock First-order stochastic processes.
\newblock \emph{J. Chem. Phys.} \textbf{32}, 247--250 (1960).

\bibitem[{Darvey \& Staff(1966)}]{darvey1966stochastic}
Darvey, I. \& Staff, P.
\newblock Stochastic approach to first-order chemical reaction kinetics.
\newblock \emph{J. Chem. Phys.} \textbf{44}, 990--997 (1966).

\bibitem[{Van~Kampen(1976)}]{van1976equilibrium}
Van~Kampen, N.~G.
\newblock The equilibrium distribution of a chemical mixture.
\newblock \emph{Phys. Lett. A} \textbf{59}, 333--334 (1976).

\bibitem[{Gans(1960)}]{gans1960open}
Gans, P.~J.
\newblock Open First-Order Stochastic Processes.
\newblock \emph{J. Chem. Phys.} \textbf{33}, 691--694 (1960).

\bibitem[{Gadgil \emph{et~al.}(2005)Gadgil, Lee \&
  Othmer}]{gadgil2005stochastic}
Gadgil, C., Lee, C.~H. \& Othmer, H.~G.
\newblock A stochastic analysis of first-order reaction networks.
\newblock \emph{Bull. Math. Biol.} \textbf{67}, 901--946 (2005).

\bibitem[{Heuett \& Qian(2006)}]{heuett2006grand}
Heuett, W.~J. \& Qian, H.
\newblock Grand canonical Markov model: a stochastic theory for open
  nonequilibrium biochemical networks.
\newblock \emph{J. Chem. Phys.} \textbf{124}, 044110 (2006).

\bibitem[{Jahnke \& Huisinga(2007)}]{jahnke2007solving}
Jahnke, T. \& Huisinga, W.
\newblock Solving the chemical master equation for monomolecular reaction
  systems analytically.
\newblock \emph{J. Math. Biol.} \textbf{54}, 1--26 (2007).

\bibitem[{Horn \& Jackson(1972)}]{horn1972general}
Horn, F. J.~M. \& Jackson, R.
\newblock General mass action kinetics.
\newblock \emph{Arch. Ration. Mech. An.} \textbf{47}, 81--116 (1972).

\bibitem[{Anderson \emph{et~al.}(2010)Anderson, Craciun \&
  Kurtz}]{anderson2010product}
Anderson, D.~F., Craciun, G. \& Kurtz, T.~G.
\newblock Product-form stationary distributions for deficiency zero chemical
  reaction networks.
\newblock \emph{Bull. Math. Biol.} \textbf{72}, 1947--1970 (2010).

\bibitem[{Cappelletti \& Wiuf(2016)}]{cappelletti2016product}
Cappelletti, D. \& Wiuf, C.
\newblock Product-form poisson-like distributions and complex balanced reaction
  systems.
\newblock \emph{SIAM J. Appl. Math.} \textbf{76}, 411--432 (2016).

\bibitem[{Reis \emph{et~al.}(2018)Reis, Kromer \& Klipp}]{reis2018general}
Reis, M., Kromer, J.~A. \& Klipp, E.
\newblock General solution of the chemical master equation and modality of
  marginal distributions for hierarchic first-order reaction networks.
\newblock \emph{J. Math. Biol.} \textbf{77}, 377--419 (2018).

\bibitem[{Bokes \emph{et~al.}(2012)Bokes, King, Wood \& Loose}]{bokes2012exact}
Bokes, P., King, J.~R., Wood, A.~T. \& Loose, M.
\newblock Exact and approximate distributions of protein and mRNA levels in the
  low-copy regime of gene expression.
\newblock \emph{J. Math. Biol.} \textbf{64}, 829--854 (2012).

\bibitem[{Pendar \emph{et~al.}(2013)Pendar, Platini \&
  Kulkarni}]{pendar2013exact}
Pendar, H., Platini, T. \& Kulkarni, R.~V.
\newblock Exact protein distributions for stochastic models of gene expression
  using partitioning of Poisson processes.
\newblock \emph{Phys. Rev. E} \textbf{87}, 042720 (2013).

\bibitem[{Wang \& Zhou(2014)}]{wang2014alternative}
Wang, Q. \& Zhou, T.
\newblock Alternative-splicing-mediated gene expression.
\newblock \emph{Phys. Rev. E} \textbf{89}, 012713 (2014).

\bibitem[{Norris \emph{et~al.}(1998)Norris, Norris \&
  Norris}]{norris1998markov}
Norris, J.~R., Norris, J.~R. \& Norris, J.~R.
\newblock \emph{Markov chains}.
\newblock No.~2 (Cambridge university press, 1998).

\bibitem[{Jia(2016)}]{jia2016model}
Jia, C.
\newblock Model simplification and loss of irreversibility.
\newblock \emph{Phys. Rev. E} \textbf{93}, 052149 (2016).

\bibitem[{Johnson(2002)}]{johnson2002curious}
Johnson, W.~P.
\newblock The curious history of Fa{\`a} di Bruno's formula.
\newblock \emph{The American mathematical monthly} \textbf{109}, 217--234
  (2002).

\bibitem[{Bell(1927)}]{bell1927partition}
Bell, E.~T.
\newblock Partition polynomials.
\newblock \emph{Annals of Mathematics} 38--46 (1927).

\bibitem[{Ryan \emph{et~al.}(1991)Ryan, King \& Thomas}]{1991Cleavage}
Ryan, M.~D., King, A. M.~Q. \& Thomas, G.~P.
\newblock Cleavage of foot-and-mouth disease virus polyprotein is mediated by
  residues located within a 19 amino acid sequence.
\newblock \emph{J. Gen. Virol.} \textbf{72 ( Pt 11)}, 2727 (1991).

\bibitem[{Andrea \emph{et~al.}(2005)}]{Andrea2005Development}
Andrea \emph{et~al.}
\newblock Development of 2A peptide-based strategies in the design of
  multicistronic vectors: Expert Opinion on Biological Therapy: Vol 5, No 5.
\newblock \emph{Expert Opin. Biol. Ther.}  (2005).

\bibitem[{Liu \emph{et~al.}(2017)}]{2017Systematic}
Liu, Z. \emph{et~al.}
\newblock Systematic comparison of 2A peptides for cloning multi-genes in a
  polycistronic vector.
\newblock \emph{Sci. Rep.} \textbf{7}, 2193 (2017).

\bibitem[{Ryan \emph{et~al.}(2001)}]{2001Analysis}
Ryan, M.~D. \emph{et~al.}
\newblock Analysis of the aphthovirus 2A/2B polyprotein 'cleavage' mechanism
  indicates not a proteolytic reaction, but a novel translational effect: a
  putative ribosomal 'skip'.
\newblock \emph{J. Gen. Virol.} \textbf{82}, 1013--1025 (2001).

\bibitem[{Donnelly \emph{et~al.}(2001)Donnelly, Hughes, Luke, Mendoza \&
  Ryan}]{2001The}
Donnelly, M. L.~L., Hughes, L.~E., Luke, G., Mendoza, H. \& Ryan, M.~D.
\newblock The 'cleavage' activities of foot-and-mouth disease virus 2A
  site-directed mutants and naturally occurring '2A-like' sequences.
\newblock \emph{J. Gen. Virol.} \textbf{82}, 1027 (2001).

\bibitem[{Loukas \& Kemp(1986)}]{loukas1986index}
Loukas, S. \& Kemp, C.
\newblock The index of dispersion test for the bivariate Poisson distribution.
\newblock \emph{Biometrics} 941--948 (1986).

\bibitem[{Munsky \& Khammash(2006)}]{munsky2006finite}
Munsky, B. \& Khammash, M.
\newblock The finite state projection algorithm for the solution of the
  chemical master equation.
\newblock \emph{J. Chem. Phys.} \textbf{124}, 044104 (2006).

\bibitem[{Saitou(2013)}]{saitou2013introduction}
Saitou, N.
\newblock Introduction to evolutionary genomics.
\newblock \emph{J. Math. Biol.}  (2013).

\bibitem[{La~Manno \emph{et~al.}(2018)}]{la2018rna}
La~Manno, G. \emph{et~al.}
\newblock RNA velocity of single cells.
\newblock \emph{Nature} \textbf{560}, 494--498 (2018).

\bibitem[{Li \emph{et~al.}(2020)Li, Shi, Wu \& Zhou}]{li2020mathematics}
Li, T., Shi, J., Wu, Y. \& Zhou, P.
\newblock On the Mathematics of RNA Velocity I: Theoretical Analysis.
\newblock \emph{bioRxiv}  (2020).

\bibitem[{Jia \& Grima(2021)}]{jia2021frequency}
Jia, C. \& Grima, R.
\newblock Frequency domain analysis of fluctuations of mRNA and protein copy
  numbers within a cell lineage: theory and experimental validation.
\newblock \emph{Phys. Rev. X} \textbf{11}, 021032 (2021).

\bibitem[{Cai \emph{et~al.}(2006)Cai, Friedman \& Xie}]{cai2006stochastic}
Cai, L., Friedman, N. \& Xie, X.~S.
\newblock Stochastic protein expression in individual cells at the single
  molecule level.
\newblock \emph{Nature} \textbf{440}, 358--362 (2006).

\bibitem[{Jia(2017)}]{jia2017simplification}
Jia, C.
\newblock Simplification of Markov chains with infinite state space and the
  mathematical theory of random gene expression bursts.
\newblock \emph{Phys. Rev. E} \textbf{96}, 032402 (2017).

\bibitem[{Ajith \emph{et~al.}(2016)}]{ajith2016position-dependent}
Ajith, S. \emph{et~al.}
\newblock Position-dependent activity of CELF2 in the regulation of splicing
  and implications for signal-responsive regulation in T cells.
\newblock \emph{RNA Biol.} \textbf{13}, 569--581 (2016).

\bibitem[{Fu \& Ares(2014)}]{Fu2014Context}
Fu, X.~D. \& Ares, M.
\newblock Context-dependent control of alternative splicing by RNA-binding
  proteins.
\newblock \emph{Nat. Rev. Genet.} \textbf{15}, 689--701 (2014).

\bibitem[{Baralle \& Giudice(2017)}]{2017Alternative}
Baralle, F.~E. \& Giudice, J.
\newblock Alternative splicing as a regulator of development and tissue
  identity.
\newblock \emph{Nat. Rev. Mol. Cell Biol.} \textbf{18} (2017).

\bibitem[{Jia \emph{et~al.}(2017{\natexlab{a}})Jia, Zhang \&
  Qian}]{jia2017emergent}
Jia, C., Zhang, M.~Q. \& Qian, H.
\newblock Emergent L{\'e}vy behavior in single-cell stochastic gene expression.
\newblock \emph{Phys. Rev. E} \textbf{96}, 040402 (2017{\natexlab{a}}).

\bibitem[{Jia \emph{et~al.}(2017{\natexlab{b}})Jia, Xie, Chen \&
  Zhang}]{jia2017stochastic}
Jia, C., Xie, P., Chen, M. \& Zhang, M.~Q.
\newblock Stochastic fluctuations can reveal the feedback signs of gene
  regulatory networks at the single-molecule level.
\newblock \emph{Sci. Rep.} \textbf{7}, 1--9 (2017{\natexlab{b}}).

\bibitem[{Jia \emph{et~al.}(2019)Jia, Yin, Zhang \emph{et~al.}}]{jia2019single}
Jia, C., Yin, G.~G., Zhang, M.~Q. \emph{et~al.}
\newblock Single-cell stochastic gene expression kinetics with coupled
  positive-plus-negative feedback.
\newblock \emph{Phys. Rev. E} \textbf{100}, 052406 (2019).

\bibitem[{Jia \emph{et~al.}(2018)Jia, Qian, Chen \& Zhang}]{jia2018relaxation}
Jia, C., Qian, H., Chen, M. \& Zhang, M.~Q.
\newblock Relaxation rates of gene expression kinetics reveal the feedback
  signs of autoregulatory gene networks.
\newblock \emph{J. Chem. Phys.} \textbf{148}, 095102 (2018).

\bibitem[{Olver \emph{et~al.}(2017)Olver, Lozier, Boisvert \&
  Clark}]{olver2010nist}
Olver, F.~W., Lozier, D.~W., Boisvert, R.~F. \& Clark, C.~W.
\newblock N{I}{S}{T} Digital Library of Mathematical Functions  (2017).
\end{thebibliography}

\end{document}